\newcommand{\remove}[1]{}
\begin{document}

\title{Spline Smoothing for Estimation of Circular Probability Distributions via Spectral Isomorphism and its Spatial Adaptation}
\author{Kinjal Basu$^1$ \and Debapriya Sengupta$^2$}
\institute{Indian Statistical Institute\\
203, B.T. Road, Kolkata - 700 108, India\\
Email : $^1$bst0707@isical.ac.in, $^2$dps@isical.ac.in}
\maketitle

\begin{abstract}
Consider the problem when $X_1, X_2, \ldots, X_n$ are distributed on a circle following an unknown distribution $F$ on $S^1$. In this article we have consider the absolute general set-up where the density can have local features such as discontinuities and edges. Furthermore, there can be outlying data which can follow some discrete distributions. The traditional Kernel Density Estimation methods fail to identify such local features in the data. Here we device a non-parametric density estimate on $S^1$, by the use of a novel technique which we term as Fourier Spline. We have also tried to identify and incorporate local features such as support, discontinuity or edges in the final density estimate. Several new results are proved in this regard. Simulation studies have also been performed to see how our methodology works. Finally a real life example is also shown.
\\\\
\textit{Keywords} : Non-parametric density estimation, circular data, Smoothing Spline, empirical Fourier coefficients, Fourier Basis, Detection of Localisation, Edge preserving function estimation.
\end{abstract}

\section{Introduction}
Let $(\mathcal{X},\mathcal{A})$ be a metric space with a Borel $\sigma$-field and $P$ be a probability distribution on $(\mathcal{X},\mathcal{A})$. Further we assume that this space has an inherent topological structure. Let us say we have samples $X_1, X_2, \ldots, X_n$ i.i.d from $P$. Let the us denote the empirical distribution as $P_n$, where,
\begin{equation}
P_n = \frac{1}{n}\sum_{i=1}^{n}\delta_{X_i}.
\end{equation}
Note that the empiricals may be given in terms of histogram data as well, i.e. we can have data from disjoint partitions $A_1, A_2, \ldots A_k$ on the sample space, with uniform Haar measure $\lambda$. In such a situation the empirical distribution is given by
\begin{equation}
P_n = \sum_{i=1}^{k}\frac{n_i}{n}\frac{I_{A_i}(\cdot)}{\lambda(A_i)}
\end{equation}
where $n_i$ denotes the number of points in $A_i$ for $i = 1 \ldots k$.

Assuming that the data is from some unknown density $f(x)$ a natural component of exploratory data analysis is to estimate the function $f(\cdot)$. Any density estimator is a descriptor of the population distribution, and when it is known that the population distribution is absolutely continuous with respect to some standard invariant measure, the Radon-Nykodym derivative is the density. Now, the major question is what is the central problem in using empirical distribution for density estimation? Note that $\widehat{P(f)} = \int fdP_n$ is unbiased and $Var(\widehat{P(f)})$ goes to 0 as $1/n$. However, in case we assume that $P$ has a derivative $p$ with respect to the Haar measure $\lambda$ on $(\mathcal{X},\mathcal{A})$ (in the sense of the Radon-Nykodym derivative) the empirical estimate does not work. For example, it is a known that in $\mathbb{R}$, $\frac{count(a,b)}{b-a}$ does not work.

There are few different approaches in practice which are majorly used, viz., Kernel density estimation, Spline smoothing and Orthogonal Series.

\subsection{Kernel Density Estimation}
Kernel density estimation (KDE) is a non-parametric way to estimate the probability density function of a random variable. Kernel density estimation is a fundamental data smoothing problem where inferences about the population are made, based on a finite data sample. The solution to the smoothing problem lies in the space of all continuously twice differentiable functions, with square integrable second derivative.
For KDE the Reisz representation is essentially used. $P = Q$ if $\int\phi dP = \int\phi dQ$ for all bounded continuous functions $\phi$. Let $K_h(x - \cdot), x \in \mathcal{X}$ denote the family of bounded continuous functions for which the estimate is unbiased, $h$ being a smoothing parameter, i.e.
\begin{equation}
\widehat{P(K_h(x - \cdot))} = \int K_h(x - y)dP_n = P_n(K_h(x - \cdot)) = \frac{1}{n}\sum_{i=1}^{n}K_h(x - X_i)
\end{equation}
$K_h$ satisfies for bounded continuous functions,
\begin{equation}
f(x) = \lim_{h \to 0}\int K_h(x - y) f(y) d\lambda(y) = \delta_x(f)
\end{equation}
Now $d\nu = K_h(x - y)d\lambda(y)$ can be thought of as an approximation to the degenerate measure $\delta_x$ when $h$ is small. Note the $\nu << \lambda$ with $K_h(x - \cdot)$ as its Radon-Nykodym derivative. Thus (3) defines a density estimate. 
\begin{equation}
\hat{f}(x;h) = \frac{1}{n}\sum_{i=1}^{n}K_h(x - X_i)
\end{equation}
When $\mathcal{X} = \mathbb{R}$, we have the following results from Silverman (1986) \cite{Silverman} ,
\begin{eqnarray}
\textnormal{Bias}(\hat{f}(x)) &\approx & \frac{h^2}{2}f{''}(x)\int_{-\infty}^{\infty}z^2K(z)\;dz\\
\textnormal{Var}(\hat{f}(x)) &\approx & \frac{1}{nh}f(x)\int_{-\infty}^{\infty}K^2(z)dz\\
\textnormal{MSE}(\hat{f}(x)) &\approx & \frac{h^4}{4}f{''}(x)^2\left(\int_{-\infty}^{\infty}z^2K(z)\;dz\right)^2 + \frac{1}{nh}f(x)\int_{-\infty}^{\infty}K^2(z)dz
\end{eqnarray}
Integrating with respect to $x$, we get
\begin{equation}
\textnormal{MISE}(\hat{f}) \approx \frac{1}{4}h^4k_2^2\beta(f) + \frac{1}{nh}j_2
\end{equation}
where $K_h(x) = K(x/h)/h,\;\; k_2 = \int z^2K(z)dz,\;\; \beta(f) = \int f''(x)^2dx$ and $j_2 = \int K^2(z)dz$. Based on this the optimal value of the bandwidth is derived as 
\begin{equation}
h_{opt} = \left(\frac{1}{n} \frac{j_2k_2^{-2}}{\beta(f)}\right)^{1/5}. 
\end{equation}
Thus, the minimal MISE can be shown to be
\begin{equation}
MISE_{opt}(\hat{f}) = \frac{5}{4}\left(\frac{\beta(f)j_2^4k_2^2}{n^4}\right)^{1/5}
\end{equation}

\subsection{Spline Smoothing}
\subsubsection{1.2.1 Classical Smoothing Spline Background : \\}
The usual smoothing spline problem \cite{Grace} can be considered as a minimization problem. Let $(x_i,Y_i);\;\; x_1<x_2<\ldots<x_n, i \in \mathbb{Z}$ be a sequence of observations, modelled by the relation $Y_i = \mu(x_i)$. The smoothing spline estimate $\hat{\mu}$ of the function $\mu$ is defined to be the minimizer \cite{hastie} (over the class of twice differentiable functions) of 

\begin{equation}
\sum_{i=1}^n (Y_i - \hat\mu(x_i))^2 + \lambda \int_{x_1}^{x_n} \hat\mu''(x)^2 \,dx. 
\end{equation}
Note that
\begin{itemize}
\item $\lambda \ge 0$ is a smoothing parameter, controlling the trade-off between fidelity to the data and roughness of the function estimate.
\item  The integral is evaluated over the range of the $x_i$.
\item As $\lambda\to 0$ (no smoothing), the smoothing spline converges to the interpolating spline.
\item As $\lambda\to\infty$ (infinite smoothing), the roughness penalty becomes paramount and the estimate converges to a linear least squares estimate.
\end{itemize}

\subsubsection {1.2.2 Typical Solution to the Smoothing Problem\\}
It is useful to think of fitting a smoothing spline in two steps:
\begin{itemize}
\item First, derive the values $\hat\mu(x_i);i=1,\ldots,n.$
\item From these values, derive $\hat\mu(x)$ for all $x$.
\end{itemize}
   
Now, consider the second step first.

Given the vector $\hat{m} = (\hat\mu(x_1),\ldots,\hat\mu(x_n))^T$ of fitted values, the sum-of-squares part of the spline criterion is fixed. It remains only to minimize $\int \hat\mu''(x)^2 \, dx$, and the minimizer is a natural cubic spline that interpolates the points $(x_i,\hat\mu(x_i))$. This interpolating spline is a linear operator, and can be written in the form
\begin{equation}
    \hat\mu(x) = \sum_{i=1}^n \hat\mu(x_i) f_i(x) 
\end{equation}
where $f_i(x)$ are a set of spline basis functions. As a result, the roughness penalty has the form
\begin{equation}
\int \hat\mu''(x)^2 dx = \hat{m}^T A \hat{m}. 
\end{equation}
where the elements of A are $\int f_i''(x) f_j''(x)dx$. The basis functions, and hence the matrix A, depend on the configuration of the predictor variables $x_i$, but not on the responses $Y_i$ or $\hat m$.

Now going back to the first step, the penalized sum-of-squares can be written as
\begin{equation}
    \|Y - \hat m\|^2 + \lambda \hat{m}^T A \hat m, 
\end{equation}
where $Y=(Y_1,\ldots,Y_n)^T$. Minimizing over $\hat m$ gives
\begin{equation}
    \hat m = (I + \lambda A)^{-1} Y. 
\end{equation}

\subsection{Orthogonal Series}
It helps in approximations in a suitable Hilbert space $\mathcal{H}$ of functions containing smooth functions. However, since the original probability distribution is unknown, the choice of $\mathcal{H}$ becomes arbitrary. Different techniques are needed for different spaces, for eg. in Sobolev space [GIVE REF], we take an orthonormal basis $\lbrace u_m \rbrace_{m \in \mathbb{Z}}$ of $\mathcal{H}$.
\begin{equation}
\widehat{P(u_m)} = P_n(u_m) = \frac{1}{n}\sum_{i=1}^{n}u_m(X_i)
\end{equation}
Note that here using feature transformation we can transform the probability into the space of sequences, i.e. $\mathbb{P}_n \longrightarrow \left(\mathbb{P}_n(u_m) : m \in \mathbb{Z}\right)$. Although it is true that $\sum P(u_m) u_m$ is convergent in $\mathcal{H}$, the formal Fourier transformation, $\sum \mathbb{P}_n(u_m) u_m$ does not converge in $\mathcal{H}$. Mimicking the KDE approach, the smoothing here requires the multiplication of the empirical coefficients by a smoothing sequence $\lbrace c_m \rbrace$ such that $\sum c_m^2 |\mathbb{P}_m(u_m)|^2 < \infty$.

Now $P(\phi)$ with $\phi = \sum \langle u_m, \phi \rangle u_m$ can be written as 
\begin{equation}
P(\phi) = \int \phi dP = \int \phi \frac{dP}{d\lambda}d\lambda = \sum \langle u_m, \phi \rangle \int u_m \frac{dP}{d\lambda}d\lambda
\end{equation}
Thus a natural estimate of $P(\phi)$ is then
\begin{equation}
\widehat{P(\phi)} = \sum \langle u_m, \phi \rangle \mathbb{P}_n(u_m)c_m
\end{equation}
Now by the Cauchy-Schwarz inequality the right hand side is convergent in $\mathcal{H}$. Hence the orthogonal series estimator of $dP/d\lambda$ is extracted by applying uniqueness of inverse Fourier transformation [GIVE REFS] to the coefficients $\lbrace \mathbb{P}_n(u_m)c_m \rbrace$ i.e.
\begin{equation}
\widehat{\frac{dP}{d\lambda}} = \sum \mathbb{P}_n(u_m)c_m u_m
\end{equation}

\subsubsection{Remarks : }
The choice of the Hilbert space is unknown. There are several different Hilbert spaces and basis which can be used, such as the Fourier basis, and wavelets based on unknown smoothness parameters and the choice of the thresholding sequence $\lbrace c_m \rbrace$ \cite{Donoho}. Due to the arbitrary-ness in the choice of Hilbert space, the choice of $\lbrace c_m \rbrace$, its efficiency and error estimate is user driven.

The orthogonal series estimate are also convolution estimates $\frac{1}{n}\sum \tilde{K}(x - X_i)$ where $\tilde{K}(x) = \sum c_mu_m(x)$. This holds by the convolution property of the Fourier transformation.

There is also an issue regarding the sample space itself. If $(\mathcal{X},\mathcal{A})$ does not have a nice topological group structure, (for eg. manifolds in $\mathbb{R}^n$, etc), then the question of estimation of density is not well posed because we do not have a base measure with respect to which we can define the density. In such cases we look for a transformation $T$ which is a 1-1 measurable open mapping that maps $(\mathcal{X},\mathcal{A})$ into a locally compact topological group. Since there exists a natural Haar measure $\lambda$ in the topological group, it makes sense to talk about density in this situation. Note that $T$ must satisfy $\lambda(T(\mathcal{X})^c) = 0$. If we denote the density in this space by $g_T = \frac{dPoT^{-1}}{d\lambda}$, we can estimate all integrals of the form $\int h(T)dP$ by 
\begin{equation}
\widehat{\int h(T)dP} = \int h(y)\hat{g}_T(y) d\lambda(y)
\end{equation}
Such functions generate the $\sigma$-field $\sigma(T)$. If $\sigma(T) = \mathcal{B}$ (Borel $\sigma$-field), i.e., if $T$ is a 1-1 isomorphic mapping, we can find all such integrals in this indirect fashion.

\subsection{Focus on estimation on $S^1$ with respect to the Haar measure}
The major reasons for concentrating on $S^1$ are the as follows
\begin{itemize}
\item Trigonometric Fouries basis are computationally very efficient, with tools such as the Fast Fourier Transformations, etc
\item The Bochner's Theorem which states that,
\begin{theorem}
Every positive definite function $Q$ is the Fourier transform of a positive finite Borel measure.
\end{theorem}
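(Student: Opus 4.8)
The plan is to prove the two implications separately, noting first that ``positive definite'' must be understood (as is standard) together with continuity of $Q$, since the conclusion fails without it. Throughout I take $Q$ to be a continuous positive definite function on $\mathbb{R}$; the same argument runs on any locally compact abelian group paired with its dual, and specialises to the circle/Herglotz setting used in this paper. The easy direction is to verify that if $Q(t)=\hat\mu(t)=\int e^{itx}\,d\mu(x)$ for a positive finite Borel measure $\mu$, then for any nodes $t_1,\dots,t_N$ and scalars $c_1,\dots,c_N$,
\begin{equation}
\sum_{j,k} c_j\bar c_k\, Q(t_j-t_k)=\int\Big|\sum_j c_j e^{it_j x}\Big|^2 d\mu(x)\ge 0,
\end{equation}
so $Q$ is positive definite. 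The substantive content is the converse.

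For the converse I would first extract the elementary consequences of the defining quadratic form: testing on a single node gives $Q(0)\ge 0$, and testing on the pair $\{0,t\}$ forces the Hermitian matrix $\big(\begin{smallmatrix}Q(0)&Q(-t)\\ Q(t)&Q(0)\end{smallmatrix}\big)$ to be positive semidefinite, whence $Q(-t)=\overline{Q(t)}$ and $|Q(t)|\le Q(0)$. After normalising $Q(0)=1$ we may thus assume $Q$ is bounded and Hermitian-symmetric. The core is then a regularisation-and-compactness argument. For each $T>0$ define
\begin{equation}
p_T(x)=\frac{1}{2\pi T}\int_0^T\!\!\int_0^T Q(s-t)\,e^{-ix(s-t)}\,ds\,dt .
\end{equation}
Approximating the double integral by Riemann sums and invoking continuity of $Q$, the positive-definiteness inequality passes to the limit and shows $p_T(x)\ge 0$ for every $x$. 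The substitution $u=s-t$ rewrites $p_T$ as the Fourier transform of the triangularly windowed (Fej\'er-type) function $(1-|u|/T)_+Q(u)$, and Fourier inversion yields $\int_{\mathbb R}p_T(x)\,dx=Q(0)=1$. Hence each $\mu_T(dx)=p_T(x)\,dx$ is a genuine probability measure.

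The final step is to let $T\to\infty$. Continuity of $Q$ at the origin is exactly what delivers tightness of the family $\{\mu_T\}$: since $Q(0)-\mathrm{Re}\,Q(t)\to 0$ as $t\to0$, the standard L\'evy-type truncation estimate bounds $\mu_T(\{|x|>R\})$ uniformly by a quantity that can be made small by choosing $R$ large. By Helly's selection theorem (equivalently, Prokhorov compactness, or weak-$\ast$ compactness via Banach--Alaoglu) some subsequence $\mu_{T_n}$ converges weakly to a positive finite measure $\mu$. I would then identify $\hat\mu=Q$: the inversion relations satisfied by $p_T$ give $\int e^{itx}\,d\mu_T(x)=(1-|t|/T)_+Q(t)$, and passing to the weak limit recovers $Q(t)=\int e^{itx}\,d\mu(x)$ pointwise; uniqueness of the representing measure follows from the uniqueness theorem for Fourier--Stieltjes transforms.

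I expect the main obstacle to lie in two delicate points of the converse. First, rigorously transferring the discrete positive-definiteness inequality to the continuous statement $p_T\ge 0$ requires a Riemann-sum approximation that must be controlled uniformly, which is precisely where boundedness and continuity of $Q$ enter. Second, and more importantly, proving tightness is essential: without it the weak limit could lose mass at infinity, leaving $\hat\mu$ merely dominating rather than equalling $Q$. Both difficulties reduce to the continuity of $Q$ at the origin, so that hypothesis is the true linchpin of the whole argument.
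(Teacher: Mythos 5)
Your proof is correct in outline and takes a genuinely different route from the paper's. The paper proves Bochner's theorem by an operator-theoretic (GNS-type) construction: the positive definite kernel $K(x,y)=Q(y-x)$ induces a Hilbert space of equivalence classes of finitely supported functions, the shifts $U_t$ act unitarily and form a strongly continuous one-parameter group, Stone's theorem produces a self-adjoint generator $A$, and the spectral measure of $A$ attached to the cyclic vector $[e_0]$ is the desired $\mu$. You instead give the classical analytic proof: Fej\'er-type regularisation $p_T\ge 0$, normalisation $\int p_T\,dx=Q(0)$, tightness from continuity at the origin, and Helly/Prokhorov selection. What each buys: your argument is elementary and self-contained (no Stone's theorem or spectral theorem), makes the role of continuity at $0$ completely explicit, and is the form most useful for characteristic functions in probability; the paper's argument is shorter modulo the imported machinery and generalises directly to representations of arbitrary locally compact groups. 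Both proofs need continuity of $Q$ --- in the paper's version it is hidden in the strong continuity hypothesis of Stone's theorem --- and you are right to flag that the theorem as stated omits it. One small point to tighten in your write-up: the identity $\int_{\mathbb R}p_T(x)\,dx=Q(0)$ cannot be obtained from bare Fourier inversion, since $p_T$ is not known a priori to be integrable; the standard repair is to integrate $p_T$ against a second Fej\'er window $(1-|x|/M)_+$, use the approximate-identity property to get $\int(1-|x|/M)_+p_T(x)\,dx\to Q(0)$ as $M\to\infty$, and then invoke monotone convergence using $p_T\ge 0$. With that patch the argument is complete.
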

\begin{proof}
Let $F_0(\mathbb{R})$ be the family of complex valued functions on $\mathbb{R}$ with finite support, i.e. $f(x) = 0$ for all but finitely many $x$. The positive definite kernel $K(x,y)$ induces a sesquilinear form on $F_0(\mathbb{R})$. This in turn results in a Hilbert space $\left( \mathcal{H}, \langle \;,\; \rangle \right)$
whose typical element is an equivalence class $[g]$. For a fixed $t$ in $\mathbb{R}$, the ``shift operator" $U_t$ defined by $(U_tg)(x) = g(x - t)$, for a representative of $[g]$ is unitary. In fact the map
$t \; \stackrel{\Phi}{\mapsto} \; U_t$ is a strongly continuous representation of the additive group $\mathbb{R}$. By Stone's theorem, there exists a (possibly unbounded) self-adjoint operator $A$ such that $U_{-t} = e^{-iAt}.\;$ This implies there exists a finite positive Borel measure $\mu$ on $\mathbb{R}$ where
$\langle U_{-t} [e_0], [e_0] \rangle = \int e^{-iAt} d \mu(x)$,
where $e_0$ is the element in $F_0(\mathbb{R})$ defined by $e_0(m) = 1$ if $m = 0$ and $0$ otherwise. Because
$\langle U_{-t} [e_0], [e_0] \rangle = K(-t,0) = Q(t),$
the theorem holds.
\begin{flushright}$\blacksquare$\end{flushright}
\end{proof}
The Bochner's Theorem gives a 1-1 correspondence between non-negative definite probability measure and non-negative definite sequences.
\item The technique of Fourier spline. Choosing the orthogonal sequence as $\lbrace e^{inx} \rbrace$ satisfies other nice properties in terms of homomorphisms, error in approximations, etc. The choice of the penalty can also be easily be determined. 
\end{itemize}

The main criticism of using the Fourier basis is the fact it fails to capture local features such as discontinuities and edges. 

A survey of the literature shows several works on the adaptation to unknown smoothness by methods such as wavelet shrinkage (Donoho and Johnstone (1995)) \cite{Donoho2}, aggregation of thresholded wavelet estimators (Chesneau and Lecue (2009)) \cite{Chesneau}. Several other methods include, the cross-validation methods (Nason (1995) \cite{Nason} and Jansen (2001) \cite{Jansen}), the methods based on hypothesis tests (Abramovich, Benjamini, Donoho and Johnstone (2006) \cite{Abra}), the Lepski methods (Juditsky (1997) \cite{Jud}) and the Bayesian methods (Abramovich, Sapatinas and Silverman (1998) \cite{Abra2}).

    Edge preserving function estimation has also been studied in literature especially by MAP estimators,(see Bouman and Sauer (1992) \cite{Bouman}).

\paragraph{}
In this paper we develop a technique for estimation of densities on $S^1$ using Fourier Spline and address the issue of adapting non-smooth local features using a hybrid of frequency and spatial domain. Few results on the kernel density estimation can be found in Pelletier (2005) \cite{Pelletier} and Taylor (2008) \cite{Taylor}.

The classical kernel density estimator was first proposed by Fisher \cite{Fisher} for data lying on the circle, in which he adapted linear data methods of Silverman \cite{Silverman} and used a quartic kernel function $K(\theta) = 0.9375(1-\theta^2)^2$. However, when using data on the circle, we cannot use distance in Euclidean space, so all differences $\theta - \theta_i$ should be replaced by considering the angle between two vectors:
\begin{equation}
d_i(\theta) = ||\theta - \theta_i|| = \min(|\theta - \theta_i|, 2\pi - |\theta - \theta_i|)
\end{equation}

This may also be written as $d_i = \cos^{-1}(\boldsymbol{x'x_i})$ where $\boldsymbol{x'} =
(\cos \theta, \sin \theta)$ is a unit vector. A more natural choice for the kernel function is therefore one of the commonly used circular probability densities, such as the wrapped normal distribution, or the von Mises distribution. This leads to an alternative representation for the kernel density estimate \cite{Sengupta}
\begin{equation}
\hat{f}(\boldsymbol{\theta};h) = \frac{1}{n}\sum_{i=1}^{n}K_h(1-\boldsymbol{x'x_i})
\end{equation}

In studying properties of kernel density estimates in Euclidean space, it is common to take Taylor series approximations to give an asymptotic form for the bias and variance. These can then be combined to give an asymptotically optimal choice for the smoothing parameter; see, for example, \cite{Silverman}. For data lying on the $q$ - dimensional sphere $(q \geq 2)$ \cite{Hall} described the asymptotic bias and variance of two classes of kernel estimators. This was done by the use of directional derivatives, thus making the results a close analogue of the Taylor series methods used for data in Euclidean space.

However, the classical kernel density estimators does not give a good estimate in a general set-up, when data might follow mixture distribution of a continuous and discrete distribution with non intersecting support. In order to illustrate, let us consider the following. Suppose $X_1, X_2, \ldots X_n \sim N(0,1)$ truncated on $[-\frac{\pi}{2},\frac{\pi}{2})$ with probability $1 - \epsilon$ and a discrete distribution with probability $\epsilon$, where the discrete distribution takes the values $\frac{3\pi}{4}$ and $-\frac{3\pi}{4}$ with equal probability. Note that this discrete distribution can be viewed as outlying data. The mean squared error using the usual kernel density is estimated via simulation study. The simulations have been performed using different sample sizes and by taking $\epsilon = 0.01, 0.05$ and $0.1$. The percentage increase in the MISE when compared to the case taking $\epsilon = 0$ is shown is Table 1. The whole process is repeated $N = 10^5$ times.

\begin{table}[h]
\centering
\caption{Mean Integrated Square Error of Classical Kernel Estimate}
\begin{tabular}{|@{\hspace{0.25cm}}c@{\hspace{0.25cm}}|@{\hspace{0.25cm}}c@{\hspace{0.25cm}}|@{\hspace{0.25cm}}c@{\hspace{0.25cm}}|}
\hline
Sample Size & Value & Percentage \\
$n$ &  of $\epsilon$ & increase in MISE \\
\hline
&0.01&14.86\%\\
100&0.05&47.68\%\\
&0.10&65.68\%\\
\hline
&0.01&14.90\%\\
500&0.05&57.95\%\\
&0.10&79.13\%\\
\hline
&0.01&14.95\%\\
1000&0.05&60.26\%\\
&0.10&80.77\%\\
\hline
\end{tabular}
\end{table}

Note that the mean square error increases almost $15\%$ when only $ 1\%$ outlying data is present. For $10\%$ outlying data, the mean square error for the classical kernel estimate increases about $81\%$. The Figure 1 shows  the estimated density for the different $\epsilon$ values. Note that for $\epsilon = 0.10$, the high discrete nature of the data is being captured by a large smooth region instead of peaks at the two specified points. This raises the MSE of the Kernel density estimate as seen in Table 1. This motivates us to work on such kind of problems when he density is actually a mixture of a smooth and discrete family. We have adapted a method based on Splines using Fourier techniques, to work on such problems. Several results have been proved in this regard. A comparative study has also been performed with other known techniques. Optimality conditions and  optimal choice of smoothing parameters are also calculated theoretically. The detailed methodology is explained in Section 2. 

\begin{figure}[h]
\centering
\includegraphics[scale=0.55]{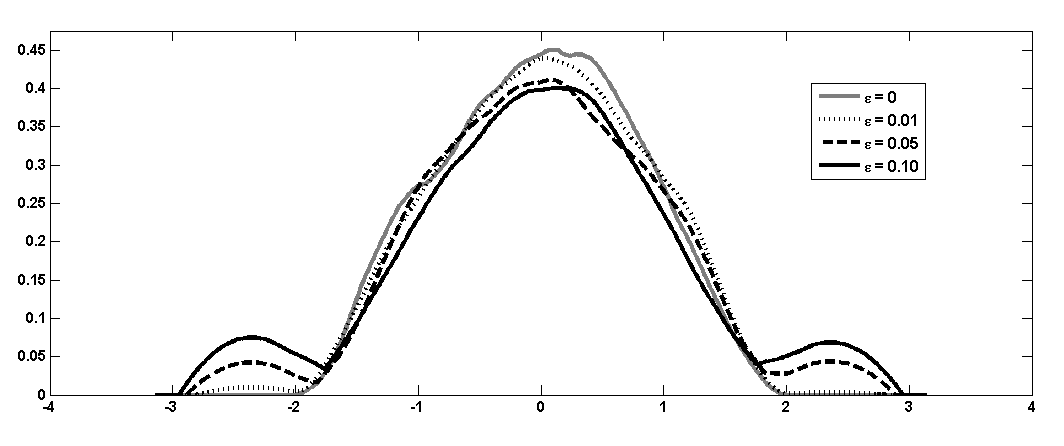}
\caption{Kernel Density Estimates for Different $\epsilon$ Values}
\end{figure}

It is also important to note that the discrete distribution with a disjoint support could actually be interpreted as some arbitrary outliers in the situation when we are dealing with circular data. Thus when dealing with such outliers two major aspects come to the limelight, viz. detection and accommodation. In this article we have developed a novel method of dealing with such outliers and accommodating them inherently. Details on the procedure are explained later in Section 3.2.

It is also particularly difficult to detect local features while using such kernel density estimators. Local features such as boundary points of support, points of discontinuity or the presence of sharp edges in the density or both, are not at all easy to identify, since during smoothing, such local features are lost. We have developed a technique to identify such local features and adapt them accordingly in the final estimate. Details are provided later in Section 3.3. The figure below shows a density drawn from a mixture of a uniform density and a triangular density, showing presence of both discontinuity and edges.

\begin{figure}[h]
\centering
\includegraphics[scale=0.55]{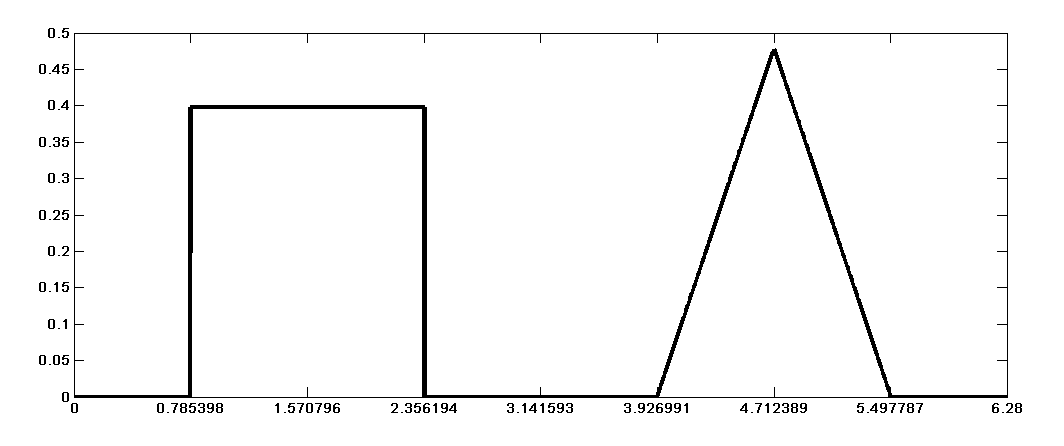}
\caption{Example of a Density with local features such as discontinuity at $\frac{\pi}{4}$ and $\frac{3\pi}{4}$ and edges at $\frac{5\pi}{4}$ and $\frac{7\pi}{4}$}
\end{figure}

With the knowledge of such points of local feature, we shall be able to break the circle into small compact disjoint sets which over which the density is smooth. So we adapt our methodology of Fourier Splines to estimate the density in these compact sets. Finally at the points of detected localizations we estimate the local features using a two parameter exponential model. Finally, we join these two estimates to get the final estimate, based on a method in general topology called the `Partition of Unity' \cite{Munkres}

\subsection{Using Partitions of Unity}

A partition of unity of a topological space $\mathcal{X}$ is a set of continuous functions, $\{\rho_i\}_{i\in I}$, from $\mathcal{X}$ to the unit interval [0,1] such that for every point, $x\in X$,
\begin{itemize}
\item there is a neighbourhood of $x$ where all but a finite number of the functions are 0, and
\item the sum of all the function values at $x$ is 1, i.e., $\sum_{i\in I} \rho_i(x) = 1$.
\end{itemize}

Sometimes, the requirement is not as strict: the sum of all the function values at a particular point is only required to be positive rather than a fixed number for all points in the space.
Partitions of unity are useful because they often allow one to extend local constructions to the whole space. 

The existence of partitions of unity assumes two distinct forms:

\begin{itemize}
\item Given any open cover $\{U_i\}_{i \in I}$ of a space, there exists a partition $\{\rho_i\}_{i \in I}$ indexed over the same set $I$ such that supp $\rho_i \subseteq U_i$. Such a partition is said to be subordinate to the open cover $\{U_i\}_i$.
\item Given any open cover $\{U_i\}_{i \in I}$ of a space, there exists a partition $\{\rho_j\}_{j \in J}$  indexed over a possibly distinct index set $J$ such that each $\rho_j$ has compact support and for each $j \in J$, supp $\rho_j \subseteq U_i$ for some $i \in I$.
\end{itemize}
Thus one chooses either to have the supports indexed by the open cover, or the supports compact. If the space is compact, then there exist partitions satisfying both requirements.

The construction uses mollifiers (bump functions), which exist in the continuous and smooth manifold categories, but not the analytic category. Thus analytic partitions of unity do not exist. 

\begin{figure}
\centering
\includegraphics[scale=0.15]{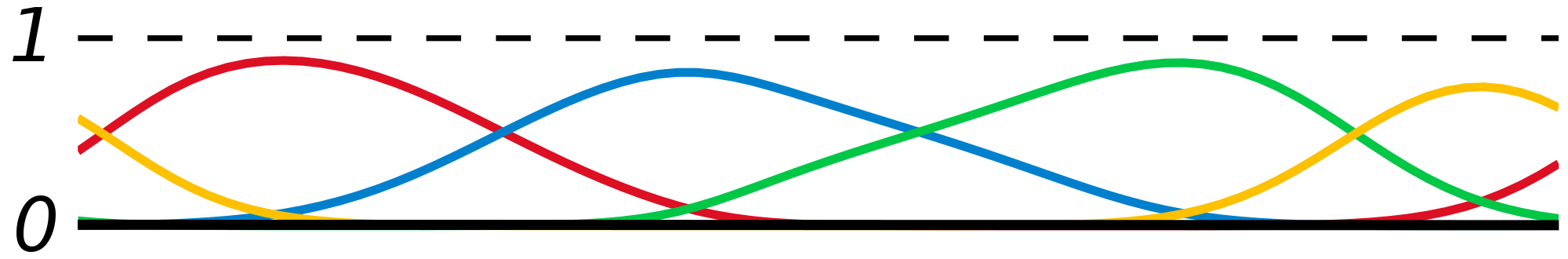}
\caption{A partition of unity of a circle with four functions. The circle is unrolled to a line segment (the bottom solid line) for graphing purposes. The dashed line on top is the sum of the functions in the partition.}
\end{figure}

For our use, partition of unity helps us in localizing the problem. Based on this, any smooth function $f$ can be decomposed as
\[\begin{split}
f &=  \left\lbrace\sum_i\rho_i + \left( 1 - \sum_i\rho_i\right)\right\rbrace f \\
  &= \sum_i\rho_if + \left(1-\sum_i\rho_i\right) f
\end{split}
\]
where $\left(1-\sum_i\rho_i\right) f$ denotes the smooth part and $\sum_i\rho_if$ covers the localization. We treat these two separately and finally add to get the final estimate such that the final estimate reflects the localization of the problem adequately. 
The novelty of our approach lies on the fact that along with the smooth estimate of the density, various local and other features are also shown. Instead of just a vector based output, a complete structure is obtained, which enables the data to be stored in a structural form. Thus being of utmost importance in situations pertaining to database handling.

\subsubsection{Remarks : }
Estimation of $ \rho_1 f, \rho_2 f, \ldots $ are usually based on wavelets. We have used a local adaptation of the wavelet procedure using an appropriate basis. Several methods of such adaptation are available in literature, especially the work by Donoho and Johnstone \cite{Donoho}. We have used local exponential model to estimate these functions, while the Fourier Spline technique is used to estimate the smooth function. The window for $\rho_1$ and $\rho_2$, ... (more if we detect more points) are determined experimentally. Note that these are anyway centred at the detected locations of discontinuities. We apply this technique on several simulated data, and the results are shown in Section 5.

The rest of the article is organized as follows; In Section 2, we focus on the methodology for density estimation using the Fourier spline approach, giving details on MSE and penalty selection. In Section 3 we adapt our method to local features (explaining the various local features) and the strategy to handle (i) support and outlier issue (ii) local exponential modelling of jump discontinuities and edges. Section 4 deals with the concept of unifying the local estimates and the smooth function estimates through an idea of partition of unity. Simulated study, real life data and discussions follow in Sections 5, 6 and 7. Finally we  conclude in Section 8.

\section{Smooth Density Estimation Using Fourier Spline}
	In our situation we are working with observations on a circle which are assumed to follow a probability distribution which is absolutely continuous with respect to the Lebesgue measure or the Haar measure. And our aim is to find its density estimate. There are several methods discussed in literature \cite{Sengupta} regarding the estimation of density on the circle such as the Kernel Density Estimates with detailed study on their Bandwidth selection \cite{Taylor}.
		
	In this article, we propose a method to use the Fourier Splines technique to get an estimate of the density at a point $x$. Comparative study has been performed between our procedure and the usual procedures based on Kernel Density Estimation. We have used Cross-validation \cite{Chiu} and Plug-in \cite{Loader} methods in our comparative study. We have not only used the usual Epanechnikov kernel but also a custom kernel as $K(\theta) \propto \cos^2(\theta)$ for $\theta \in [-\frac{\pi}{2},\frac{\pi}{2})$. The Mean Square Error studies are shown in Table 2. Finally we have incorporated a penalty selection method in the procedure and found its optimal estimate using the bias-variance trade-off calculations. Repeated simulation study has been performed to computationally prove the accuracy of our selection estimates.

\subsection{Fourier Series on Circles}
Before we begin the detailed theory, let us give a brief introduction to the General Fourier Transformation theory.

\subsubsection{2.1.1. Classical Fourier Transformation \cite{Feller} : }
There are several common conventions for defining the Fourier transform $\hat{f}$ of an integrable function $f : \mathbb{R} \to \mathbb{C}$ .In this article will use the definition:

\begin{equation}
\hat{f}(\xi) = \int_{-\infty}^{\infty} f(x)\ e^{2\pi i x \xi}\,dx,   \;\; \textnormal{ for every real number } \xi.
\end{equation}    

When the independent variable x represents time, the transform variable $\xi$  represents frequency. Under suitable conditions, $
f$ can be reconstructed from $\hat{f}$ by the inverse transform:

\begin{equation}
f(x) = \int_{-\infty}^{\infty} \hat{f}(\xi)\ e^{- 2 \pi i \xi x}\,d\xi,\;\; \textnormal{ for every real number } x.
\end{equation}

The statement that $f$ can be reconstructed from $\hat{f}$ is known as the Fourier integral theorem, and was first introduced in Fourier's Analytical Theory of Heat (Fourier 1822, p. 525), (Fourier \& Freeman 1878, p. 408), although what would be considered a proof by modern standards was not given until much later (Titchmarsh 1948, p. 1). The functions $f$ and $\hat{f}$ often are referred to as a Fourier integral pair or Fourier transform pair.
\\\\
\textbf{Few important properties}:
\begin{itemize}
\item \textbf{Plancherel theorem and Parseval's theorem : } Let $f(x)$ and $g(x)$ be integrable, and let $\hat{f}(\xi)$ and $\hat{g}(\xi)$ be their Fourier transforms. If $f(x)$ and $g(x)$ are also square-integrable, then we have Parseval's theorem (Rudin 1987, p. 187) \cite{Rudin}:

\begin{equation}
\int_{-\infty}^{\infty} f(x) \overline{g(x)} \,{\rm d}x = \int_{-\infty}^\infty \hat{f}(\xi) \overline{\hat{g}(\xi)} \,{\rm d}\xi,
\end{equation}
where the bar denotes complex conjugation.

The Plancherel theorem, which is equivalent to Parseval's theorem, states (Rudin 1987, p. 186)\cite{Rudin}:

\begin{equation}
    \int_{-\infty}^\infty \left| f(x) \right|^2\,{\rm d}x = \int_{-\infty}^\infty \left| \hat{f}(\xi) \right|^2\,{\rm d}\xi. 
\end{equation}

The Plancherel theorem makes it possible to define the Fourier transform for functions in $L^2(\mathbb{R})$. The Plancherel theorem has the interpretation in the sciences that the Fourier transform preserves the energy of the original quantity.
\\\\
\item \textbf{Convolution theorem : } The Fourier transform translates between convolution and multiplication of functions. If $f(x)$ and $g(x)$ are integrable functions with Fourier transforms $\hat{f}(\xi)$ and $\hat{g}(\xi)$ respectively, then the Fourier transform of the convolution is given by the product of the Fourier transforms $\hat{f}(\xi)$ and $\hat{g}(\xi)$ (under other conventions for the definition of the Fourier transform a constant factor may appear).

This means that if:
\begin{equation}
    h(x) = (f \star g)(x) = \int_{-\infty}^\infty f(y)g(x - y)\,dy,
\end{equation}
where $\star$ denotes the convolution operation, then:
\begin{equation}
    \hat{h}(\xi) = \hat{f}(\xi)\cdot \hat{g}(\xi).
\end{equation}

Conversely, if $f(x)$ can be decomposed as the product of two square integrable functions $p(x)$ and $q(x)$, then the Fourier transform of $f(x)$ is given by the convolution of the respective Fourier transforms $\hat{p}(\xi)$ and $\hat{q}(\xi)$. 
\end{itemize}

\subsubsection{2.1.2. General Theory on Circle : }
Let $Z_1, Z_2, \ldots, Z_n$ be the observed data on $S^1$. Let $F$ denote their distribution on $S^1$. Note that each $Z_j = e^{i\theta_j}$ and $F_n = \frac{1}{n}\sum_{i = 1}^{n}\delta_{Z_i}$. Also
\begin{equation}
\hat{F}_n(k) = \int_0^{2\pi}e^{ikx}dF_n(x) = \frac{1}{n}\sum_{i = 1}^{n}e^{ik\theta_j} = \frac{1}{n}\sum_{i=1}^{n}Z_i^k = \hat{u}_k = \frac{1}{n}\sum_{j=1}^{n}\{\cos(k\theta_j) + i\sin(k\theta_j)\}
\end{equation}
Clearly, $\hat{F}_n(0) = 1$ and $\hat{F}_n(-k) = \overline{\hat{F}_n(k)}$.
Before we move ahead, we first prove that there is a 1-1 correspondence between the data and the empirical moments. 

\subsubsection{Proposition 1.} \textit{Empirical Moments have a 1-1 correspondence with the Data.}
\begin{proof}
Suppose we know all the empirical moments, i.e., we know, $\sum Z_i, \sum Z_i^2, \ldots, \sum Z_i^n$. We need to find all the data. Note the data, $Z_1, \ldots, Z_n$ forms the root of the polynomial $\prod_{i=1}^{n}\left(z -Z_i\right)$. We can find all the coefficients of the power of $z$ from the empirical moments. For eg. the coefficient of $z^2$ is $\sum_{i \not= j}Z_iZ_j$, which can be calculated from the moments, as $\sum_{i \not= j}Z_iZ_j = (\sum Z_i)^2 - \sum Z_i^2$. Thus, if we know all the empirical moments we can compute the coefficient of all powers of $z$ of the polynomial. Hence, the roots of the polynomial will be the data. Conversely, if we know the data, we can obviously find the empirical coefficient. Thus, there is a 1-1 correspondence between the two. Hence proved.
\begin{flushright}$\blacksquare$\end{flushright}
\end{proof}

Now, given a positive finite Borel measure $\mu$ on the real line $\mathbb{R}$, the Fourier transform $Q$ of $\mu$ is the continuous function
\begin{equation}
Q(t) = \int_{\mathbb{R}} e^{-itx}d \mu(x).
\end{equation}    
$Q$ is continuous since for a fixed $x$, the function $e^{-itx}$ is continuous and periodic. The function $Q$ is a positive definite function, i.e. the kernel $K(x, y) = Q(y - x)$ is positive definite; 
Now Bochner's Theorem says that the converse if true.	

Thus in case of data on the circle, by the Bochner's Theorem, the moment transformation $E(Z^k)$, transforms the problem from the measure space to the sequence space of bi-infinite sequences which are non-negative definite. Note that the solution to the Smoothing Spline problem in circle, belongs to the class of non-negative definite sequences. Thus Bochner's Theorem gives a 1-1 correspondence between non-negative definite probability measure and non-negative definite sequences. Now all square summable non-negative definite sequences are equivalent to all measures which are absolutely continuous with respect to the Haar measure satisfying $\int |f''|^2 d\lambda < \infty$. This result gives rise to Fourier Splines. Let us explain how.

	First note that the smoothing spline problem, is actually a curve fitting problem, where the minimization takes place on the space of all functions with square integrable second derivative, correspondingly on the space of non-negative definite sequences when the data is coming from a circle. However, our problem is actually density estimation, where we have a data function $Y(x)$ and we need a fitted function $\hat\mu(x)$ such that the $L^2$ norm is minimized, that is, we need to minimize 
\begin{equation}
\frac{1}{2\pi}\int_0^{2\pi}| Y(x) - \hat\mu(x) |^2 dx 
\end{equation}. 

We know that the empirical Fourier coefficients is a sequence in $\mathbb{C}$. Thus, given data on the circle, we can transform $P_n \to \hat{P}_n  = (u_0,u_1,u_2,\ldots ) \in \mathbb{C}$. Similarly, the density can be transformed, $f \to \hat{f} = (\mu_0, \mu_1, \mu_2, \ldots)$. Now, if we are in the situation, where we know that the density belongs to $L^2(0,2\pi)$, then using the empirical Fourier coefficients we can derive the density using the inverse Fourier transformations, i.e. $\hat{f}(x) \to f(x) = \sum u_je^{ijx}$. 
Note that we can write $L^2(0,2\pi)$ as the class of functions, $\mathcal{F}_n$ defined by
\begin{equation}
\mathcal{F}_n = \left\lbrace f : f(x) = \sum_{j = -n/2}^{n/2} u_j e^{-ijx} \textnormal{ where } (u_0,u_1,\ldots) \in \mathbb{C} \textnormal{ and } \sum|u_i|^2 < \infty\right\rbrace,
\end{equation}
where we have consider the coefficients to be 0 for $|j| > \frac{n}{2}$. Thus our aim is to get a function belonging to this class. 

	Now because of the form of the Fourier coefficients on the circle, the minimization problem given in (23) reduces to $\sum_{k \neq 0}|\hat{u_k} - u_k|^2$. However, as $k$ runs from $-\infty$ to $\infty$, this sum diverges to $\infty$. Thus we work with only the $n$ moments, to get a solution belonging to $L^2(0,2\pi)$. That is we try to minimize, $\sum_{k = -n/2}^{n/2}|\hat{u_k} - u_k|^2$. Note that the empirical Fourier coefficients is not absolutely continuous with respect to the Haar measure. Hence by Bochner's Theorem, the solution to this problem derived as the empirical Fourier coefficients is not square summable.  Thus, we must add the penalty term to the problem, viz, $\lambda \|u\|^2$ to get a square summable solution. Now, square summability of the solution implies the existence of a density $f$ by the Fourier Inversion Theorem \cite{Feller}.
	
Thus the Fourier coefficient $u_k$ can now be written as 
\begin{equation}
u_k = x_k + iy_k = \frac{1}{2\pi}\int_0^{2\pi}e^{ikx}f(x)dx
\end{equation}
Note that $u_{-k} = \overline{u_k}$. Furthermore, by inverse Fourier transformation, and considering that the density is real, we can say
\[\begin{split}
f(x)&= \sum_{k=-\infty}^{\infty}u_ke^{-ikx} = 1 + \sum_{k=1}^{\infty}u_ke^{-ikx} + \sum_{k=-\infty}^{-1}u_ke^{-ikx}\\
&= 1 + \sum_{k = 1}^{\infty}\{(x_k + iy_k)e^{-ikx} + (x_k - iy_k)e^{ikx}\}\\
&= 1 + 2\sum_{k=1}^{\infty}\textnormal{Re}\{(x_k + iy_k)e^{-ikx}\}\\
\end{split}
\]
Thus we have, 
\begin{equation}
f(x) = 1 + 2\sum_{k=1}^{\infty}\{x_k\cos(kx) + y_k\sin(kx)\}
\end{equation}	

Hence now, the penalty term become $\lambda\int f^{''}(x)^2dx$, and the problem now becomes, to find the minimizer of 
\begin{equation}
\frac{1}{2\pi}\int_0^{2\pi}| Y(x) - f(x) |^2 dx  + \lambda\int f^{''}(x)^2dx
\end{equation}
in the class $L^2(0,2\pi)$. Equivalently, the problem in the smoothing spline format becomes, the minimization problem of
\begin{equation}
\sum_{k = -\infty}^{\infty}|\hat{u_k} - u_k|^2 + \lambda\{\sum_{k = 1}^{\infty}k^4(x_k^2 + y_k^2)\}
\end{equation}

Thus, we get a 1-1 correspondence between the curve fitting problem and the density estimation problem using the Fourier basis. Thus the Fourier Spline technique is a method of solving the Smoothing Spline problem, where the solution is obtained using the Fourier basis. Now we proceed to actually solving this minimization problem.

\subsection{Solution to the Minimization Problem}
\begin{theorem}
Solution to the minimization problem $\sum_{k = -\infty}^{\infty}|\hat{u_k} - u_k|^2 + \lambda\{\sum_{k = 1}^{\infty}k^4(x_k^2 + y_k^2)\}$ gives rise to a kernel like density estimate $\hat{f}_n(x) = 1 + \sum_{|k| = 1}^{\infty} \hat{u}_kC_k(\lambda) e^{-ikx}$, where $C_k(\lambda) = \frac{1}{1 + \lambda k^4}$, obtained by the convolution of the empirical Fourier coefficients, with the kernel $K(x) = 1 + \sum_{|k|=1}^{\infty}C_k(\lambda)e^{-ikx} = 1 + 2\sum_{k=1}^{\infty}\frac{\cos kx}{1 + \lambda k^4}$ for $x \in [-\pi,\pi)$
\end{theorem}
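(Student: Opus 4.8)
The plan is to exploit the fact that the penalized objective, once written purely in terms of the Fourier coefficients, splits into a sum of independent single-frequency problems, each an elementary quadratic minimization. First I would invoke the two symmetry relations already recorded in the excerpt, the conjugate symmetry $\hat{u}_{-k} = \overline{\hat{u}_k}$ of the empirical coefficients and the reality constraint $u_{-k} = \overline{u_k}$ on the candidate density, together with the normalization $u_0 = \hat{u}_0 = 1$. These give $|\hat{u}_{-k} - u_{-k}|^2 = |\hat{u}_k - u_k|^2$ and annihilate the $k=0$ term, so the bi-infinite fidelity sum collapses to $\sum_{k=-\infty}^{\infty}|\hat{u}_k - u_k|^2 = 2\sum_{k=1}^{\infty}|\hat{u}_k - u_k|^2$. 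Writing $u_k = x_k + i y_k$ so that $x_k^2 + y_k^2 = |u_k|^2$, the entire objective becomes $\sum_{k=1}^{\infty}\bigl(2|\hat{u}_k - u_k|^2 + \lambda k^4 |u_k|^2\bigr)$, a sum of terms each depending on the single complex unknown $u_k$.

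Because the summand for frequency $k$ involves only $u_k$, the global minimization decouples across $k$ and I can treat each term in isolation. For fixed $k$ the expression $2|\hat{u}_k - u_k|^2 + \lambda k^4|u_k|^2$ is a strictly convex quadratic in $u_k$; setting its Wirtinger derivative with respect to $\overline{u_k}$ to zero (equivalently, differentiating the two real quadratics in $x_k$ and $y_k$) yields the stationary point $u_k = \hat{u}_k/(1+\lambda k^4)$ once the harmless constant from the fidelity factor is absorbed into a rescaling of $\lambda$. Strict convexity makes this the unique minimizer, so $u_k = C_k(\lambda)\hat{u}_k$ with $C_k(\lambda) = (1+\lambda k^4)^{-1}$. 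Since $0 < C_k(\lambda) \le 1$ and $|\hat{u}_k| \le 1$, the smoothed coefficients obey $\sum_k |C_k(\lambda)\hat{u}_k|^2 \le \sum_k (1+\lambda k^4)^{-2} < \infty$; by the Fourier inversion theorem cited earlier this square-summability is precisely what certifies that $\{C_k(\lambda)\hat{u}_k\}_k$ is the coefficient sequence of a genuine density in $L^2(0,2\pi)$.

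It then remains to reassemble the estimate and recognize the convolution. Substituting $u_k = C_k(\lambda)\hat{u}_k$ into the inverse series $f(x) = \sum_k u_k e^{-ikx}$ and using $\hat{u}_0 = C_0(\lambda) = 1$ gives $\hat{f}_n(x) = 1 + \sum_{|k|\ge 1} \hat{u}_k C_k(\lambda) e^{-ikx}$, the claimed form. Setting $K(x) = \sum_k C_k(\lambda)e^{-ikx}$ and using that $C_k(\lambda)$ is real and even in $k$, the pairing of $\pm k$ collapses this to $K(x) = 1 + 2\sum_{k=1}^{\infty}\cos(kx)/(1+\lambda k^4)$. Finally, the coefficient sequence of $\hat{f}_n$ is the termwise product $\hat{u}_k\,C_k(\lambda)$, in which $\hat{u}_k = \hat{F}_n(k)$ are the coefficients of the empirical measure $F_n$ and $C_k(\lambda)$ are those of $K$; the convolution theorem therefore identifies $\hat{f}_n = K \star F_n$, and the direct check $\frac{1}{n}\sum_j K(x-\theta_j) = \sum_k C_k(\lambda)\hat{u}_k e^{-ikx}$ confirms both the kernel representation and the kernel-density-estimate interpretation.

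The main obstacle I anticipate is not any single computation but the justification that the infinite-dimensional minimization legitimately reduces to termwise minimization: one must argue that the separable, nonnegative, coercive structure forces the global minimizer to agree with the sequence of per-frequency minimizers, and that this minimizer lies in $\ell^2$ so the optimization really takes place over the intended class of densities with $\int|f''|^2\,d\lambda < \infty$. Treating $u_0 = 1$ as a fixed constraint rather than a free variable, and bookkeeping the constant factors so that $C_k(\lambda)$ emerges as exactly $(1+\lambda k^4)^{-1}$ (up to the conventional rescaling of $\lambda$), are the two places where care is needed.
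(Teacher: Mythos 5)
Your proposal is correct and follows essentially the same route as the paper: the objective decouples frequency by frequency into elementary quadratics in the real and imaginary parts of $u_k$ (the paper differentiates in $x_k,y_k$ where you use the equivalent Wirtinger derivative), yielding the shrinkage $u_k = C_k(\lambda)\hat{u}_k$, after which the estimate is reassembled and read as a convolution. You are in fact slightly more careful than the paper, which silently drops the factor of $2$ arising from collapsing $\sum_{k=-\infty}^{\infty}$ to $\sum_{k\ge 1}$ (you absorb it into a rescaling of $\lambda$) and does not address square-summability or the legitimacy of termwise minimization.
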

\begin{proof}
From equation (26) we get,
\begin{equation}
f_n(x) = 1 + 2\sum_{k=1}^{n/2}\{x_k\cos(kx) + y_k\sin(kx)\}
\end{equation}
Differentiating this twice, we get,
\begin{equation}
f_n^{''}(x) = -2\sum_{k=1}^{n/2}\{x_kk^2\cos(kx) + y_kk^2\sin(kx)\}
\end{equation}
We need to minimize $\sum_{k = -n/2}^{n/2}|\hat{u_k} - u_k|^2 + \lambda\{\sum_{k = 1}^{n/2}k^4(x_k^2 + y_k^2)\}$. That is 
\[\sum_{k=1}^{n/2}\bigg\{\left(\overline{\cos(k\theta)} - x_k\right)^2 + \left(\overline{\sin(k\theta)} - y_k\right)^2\bigg\} + \lambda\bigg\{\sum_{k = 1}^{n/2}k^4(x_k^2 + y_k^2)\bigg\}
\]
Differentiating this with respect to $x_k$ and $y_k$ for each $k$, and equating to zero, we get,
\begin{equation}
\widehat{x_k} = \frac{\overline{\cos(k\theta)}}{1 + \lambda k^4} \qquad \textnormal{and} \qquad \widehat{y_k} = \frac{\overline{\sin(k\theta)}}{1 + \lambda k^4}
\end{equation}
Now let $C_k(\lambda) = \frac{1}{1 + \lambda k^4}$, Thus the estimated Fourier coefficient is $\hat{u}_kC_k(\lambda)$. Now $f(x) = 1 + \sum_{|k| = 1}^{\infty} u_ke^{-ikx}$. Thus $\hat{f}_n(x) = 1 + \sum_{|k| = 1}^{\infty} \hat{u}_kC_k(\lambda) e^{-ikx}$, which is the seen as the convolution with the kernel $K(x) = 1 + \sum_{|k|=1}^{\infty}C_k(\lambda)e^{-ikx} = 1 + 2\sum_{k=1}^{\infty}\frac{\cos kx}{1 + \lambda k^4}$ for $x \in [-\pi,\pi)$.
\\\\ 
Hence, proved. 
\begin{flushright}$\blacksquare$\end{flushright}
\end{proof}

\subsubsection{Remark :} Note that our kernel may not be non-negative, for very small values of $\lambda$. The Figure 4, shows plots of the kernel $K(x)$, defined as,
\begin{equation}
K(x) = 1 + 2\sum_{k=1}^{\infty}\frac{\cos kx}{1 + \lambda k^4} \;\; \textnormal{for}\;\; x \in [-\pi,\pi)
\end{equation}
for different values of $\lambda$. It can be seen from the figure that for $\lambda < 0.8$, the Kernel does take negative values. However, such situations are not out of the ordinary. A study of the literature reveals that such kernels present in the class of higher order kernels have been used for bias reduction in kernel density estimation \cite{Jones,Jones2}. They have also been used in non-parametric curve estimation as they often give faster asymptotic rates of convergence. \cite{Marron}.

\begin{figure}[h]
\centering
\includegraphics[scale=0.8]{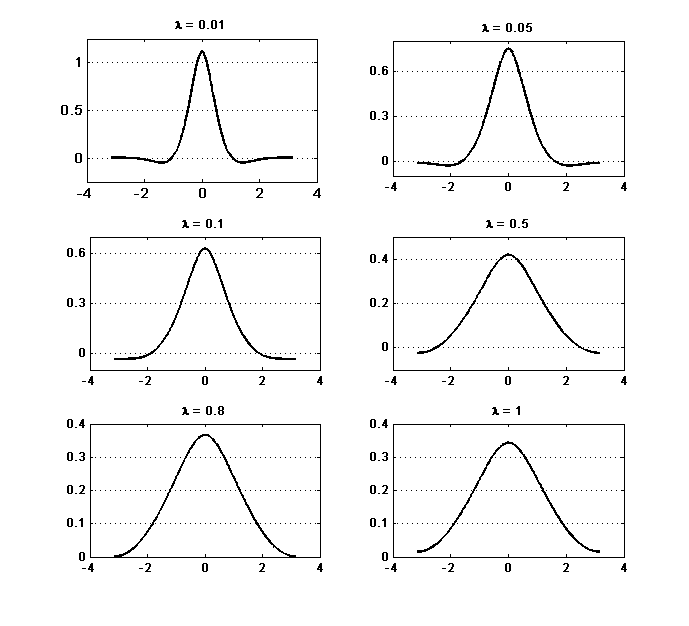}
\caption{The Kernel with different values of $\lambda$. Note that for values of $\lambda$ more than $0.8$ we get a non-negative kernel. But for smaller $\lambda$, the kernel does take non-positive values, as seen from the above Figures.}
\end{figure}

With this estimate of density at hand, we proceed to the calculation of the mean square error as well as the optimal rate and choice of the penalty parameter $\lambda$.

\begin{theorem}
For the density estimate obtained above, we have,
\[
\begin{split}
\textnormal{E}\big|\hat{f}_n(x) - f(x)\big|^2 &=  \sum_k\sum_{k'}u_{k}\bar{u}_{k'}(C_{k}(\lambda) - 1)(C_{k'}(\lambda) - 1)e^{-i(k-k')x}\\
&+ \sum_k\sum_{k'}\frac{C_{k}(\lambda)C_{k'}(\lambda)}{n}\left(u_{k-k'} - u_k\bar{u}_{k'} \right)e^{-i(k-k')x}\\
\end{split}
\]
\end{theorem}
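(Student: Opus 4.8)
The plan is to treat this as a standard bias--variance decomposition carried out entirely in the Fourier (sequence) domain, using the fact that both $\hat f_n$ and $f$ are given through their Fourier coefficients. First I would write the pointwise error as a single Fourier series. Since $\hat f_n(x) = 1 + \sum_{k}\hat u_k C_k(\lambda) e^{-ikx}$ and $f(x) = 1 + \sum_k u_k e^{-ikx}$ (the $k=0$ terms cancelling because $\hat u_0 = u_0 = 1$ and $C_0(\lambda) = 1$), their difference is
$$\hat f_n(x) - f(x) = \sum_k \big(\hat u_k C_k(\lambda) - u_k\big) e^{-ikx}.$$
Taking the squared modulus produces a double sum over indices $k, k'$, with the exponentials combining into $e^{-i(k-k')x}$ because each $C_k(\lambda)$ is real. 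So the problem reduces to evaluating, for every pair $(k,k')$, the expectation
$$E\big[(\hat u_k C_k(\lambda) - u_k)(\overline{\hat u_{k'}}\, C_{k'}(\lambda) - \overline{u_{k'}})\big].$$

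The key computation is the cross moment $E[\hat u_k \overline{\hat u_{k'}}]$, and this is where the structure of the estimator really enters. Using $\hat u_k = n^{-1}\sum_j Z_j^k$ together with $\overline{Z_l} = Z_l^{-1}$ (valid since $|Z_l| = 1$ on $S^1$), I would expand the product as a double sum over sample indices $j, l$ and split it into the diagonal $j = l$ and off-diagonal $j \neq l$ parts. On the diagonal, $Z_j^{k}Z_j^{-k'} = Z_j^{k-k'}$ has expectation $u_{k-k'}$, contributing $n^{-1}u_{k-k'}$; off the diagonal, independence of the sample and $E[Z_l^{-k'}] = \overline{u_{k'}}$ give the $n(n-1)$ terms summing to $(1 - n^{-1})u_k\overline{u_{k'}}$. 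Hence
$$E[\hat u_k \overline{\hat u_{k'}}] = u_k\overline{u_{k'}} + \frac{1}{n}\big(u_{k-k'} - u_k\overline{u_{k'}}\big).$$

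Substituting this, along with the unbiasedness $E[\hat u_k] = u_k$, into the pairwise expectation and expanding, the terms carrying no factor $1/n$ collect into $u_k\overline{u_{k'}}(C_k C_{k'} - C_k - C_{k'} + 1)$, which factors cleanly as $u_k\overline{u_{k'}}(C_k(\lambda) - 1)(C_{k'}(\lambda) - 1)$ --- the deterministic regularization/bias part. The leftover $1/n$ term is precisely $n^{-1}C_k(\lambda)C_{k'}(\lambda)\big(u_{k-k'} - u_k\overline{u_{k'}}\big)$, the stochastic variance part. Reassembling the double sum with the factor $e^{-i(k-k')x}$ then reproduces the stated identity exactly. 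I do not expect a genuine obstacle here; the only real care is in the bookkeeping of the diagonal versus off-diagonal split, in keeping straight that $\overline{u_{k'}} = u_{-k'}$ (equal because $f$ is real-valued), and in noting that interchanging expectation with the infinite double sum is justified by the square-summability of $\{u_k\}$ that the penalty term guarantees.
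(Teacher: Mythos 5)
Your proposal is correct and follows essentially the same route as the paper: expand the error as a Fourier double sum, reduce everything to the covariance of the empirical coefficients, and split the result into the deterministic factor $u_k\bar u_{k'}(C_k(\lambda)-1)(C_{k'}(\lambda)-1)$ plus the $n^{-1}C_k(\lambda)C_{k'}(\lambda)(u_{k-k'}-u_k\bar u_{k'})$ term. The only cosmetic difference is that the paper centers $\hat u_k = R_k + u_k$ and quotes $\mathrm{E}(R_k\bar R_{k'})$ directly, whereas you derive the same covariance by the diagonal/off-diagonal split over sample indices --- the algebra is identical.
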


\begin{proof}
From the expression of $\hat{f}_n$ obtained from Theorem 2, we get, 
\[
\begin{split}
\big|\hat{f}_n(x) - f(x)\big|^2 &= \big\{\sum_{|k| = 1}^{\infty} \left( \hat{u}_kC_k(\lambda)- u_k\right) e^{-ikx} \big\}\times \overline{\big\{\sum_{|k| = 1}^{\infty} \left( \hat{u}_kC_k(\lambda)- u_k\right) e^{-ikx} \big\}}\\
&=\sum_k\sum_{k'}\left(\hat{u}_kC_k(\lambda) - u_k\right)\overline{\left(\hat{u}_{k'}C_{k'}(\lambda) - u_{k'}\right)}e^{-i(k-k')x}\\
\textnormal{E}\big|\hat{f}_n(x) - f(x)\big|^2 &= \sum_k\sum_{k'}\textnormal{E}\left(\hat{u}_kC_k(\lambda) - u_k\right)\overline{\left(\hat{u}_{k'}C_{k'}(\lambda) - u_{k'}\right)}e^{-i(k-k')x}\\
&=\sum_k\sum_{k'}\textnormal{E}\left(\hat{u}_kC_k(\lambda) - u_k\right)\left(\overline{\hat{u}_{k'}}C_{k'}(\lambda) - \overline{u_{k'}}\right)e^{-i(k-k')x}\\
&=\sum_k\sum_{k'}\textnormal{E}\left[\left(\frac{1}{n}\sum_{j=1}^{n}Z_j^k\right)C_k(\lambda) - u_k\right]\left[\left(\frac{1}{n}\sum_{j=1}^{n}\bar{Z}_j^{k'}\right)C_{k'}(\lambda) - \overline{u_{k'}}\right]e^{-i(k-k')x}\\
\end{split}
\]
Now the term inside the expectation can be written as
\[
\textnormal{E}\left[\bigg\{\frac{1}{n}\sum_{j=1}^{n}\left(Z_j^k - u_k\right) + u_k\bigg\}C_k(\lambda) - u_k\right]\overline{\left[\bigg\{\frac{1}{n}\sum_{j=1}^{n}\left(Z_j^{k'} - u_{k'}\right) + u_{k'}\bigg\}C_{k'}(\lambda) - u_{k'}\right]}
\]
Replacing $\frac{1}{n}\sum_{j=1}^{n}\left(Z_j^k - u_k\right)$ by $R_k$, the expression reduces to 
\[
\begin{split}
&\textnormal{E}\{\left(R_k + u_k\right)C_k(\lambda) - u_k\}\overline{\{\left(R_{k'} + u_{k'}\right)C_{k'}(\lambda) - u_{k'}\}}\\
&=\textnormal{E}\left(R_kC_k(\lambda) + u_k(C_k(\lambda) - 1)\right)\left(\bar{R}_{k'}C_{k'}(\lambda) + \bar{u}_{k'}(C_{k'}(\lambda) - 1)\right)\\
&=C_{k}(\lambda)C_{k'}(\lambda)\textnormal{E}(R_kR_{k'}) + u_{k}(C_{k}(\lambda) - 1)\bar{u}_{k'}(C_{k'}(\lambda) - 1)\\
&=\frac{1}{n}C_{k}(\lambda)C_{k'}(\lambda)\left(u_{k-k'} - u_k\bar{u}_{k'} \right) + u_{k}(C_{k}(\lambda) - 1)\bar{u}_{k'}(C_{k'}(\lambda) - 1)\\
\end{split}
\]
Thus we get,
\[
\begin{split}
\textnormal{E}\big|\hat{f}_n(x) - f(x)\big|^2 &=  \sum_k\sum_{k'}u_{k}\bar{u}_{k'}(C_{k}(\lambda) - 1)(C_{k'}(\lambda) - 1)e^{-i(k-k')x}\\
&+ \sum_k\sum_{k'}\frac{C_{k}(\lambda)C_{k'}(\lambda)}{n}\left(u_{k-k'} - u_k\bar{u}_{k'} \right)e^{-i(k-k')x}\\
\end{split}
\]
Hence Proved.
\begin{flushright}$\blacksquare$\end{flushright}
\end{proof}

\begin{theorem}
The optimal rate of the MISE is $n^{-4/5}$, which matches the rate for the original kernel smoothing problem.
\end{theorem}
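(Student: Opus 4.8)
The plan is to integrate the pointwise mean squared error of Theorem 3 over the circle and then perform a bias--variance balancing, exactly in the spirit of Silverman's calculation (6)--(11), but carried out in the frequency domain. First I would integrate $\textnormal{E}\big|\hat{f}_n(x)-f(x)\big|^2$ against $\frac{1}{2\pi}\,dx$ over $[0,2\pi)$. Because $\frac{1}{2\pi}\int_0^{2\pi}e^{-i(k-k')x}\,dx=\delta_{kk'}$, every double sum in Theorem 3 collapses onto its diagonal $k=k'$, and using $u_0=1$ this leaves
\[
\textnormal{MISE}(\lambda)=\sum_{k}|u_k|^2\bigl(1-C_k(\lambda)\bigr)^2+\frac{1}{n}\sum_{k}C_k(\lambda)^2\bigl(1-|u_k|^2\bigr),
\]
the first term being the integrated squared bias and the second the integrated variance. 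Substituting $C_k(\lambda)=(1+\lambda k^4)^{-1}$ and $1-C_k(\lambda)=\lambda k^4/(1+\lambda k^4)$ gives
\[
\textnormal{MISE}(\lambda)=\sum_{k}|u_k|^2\frac{\lambda^2 k^8}{(1+\lambda k^4)^2}+\frac{1}{n}\sum_{k}\frac{1-|u_k|^2}{(1+\lambda k^4)^2}.
\]

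Next I would estimate the two terms separately in the regime $\lambda\to0$, $n\to\infty$. For the variance, $1-|u_k|^2\to1$ as $|k|\to\infty$ by Riemann--Lebesgue, while the correction $\frac{1}{n}\sum_k|u_k|^2(1+\lambda k^4)^{-2}=O(n^{-1})$ is of smaller order, so the leading behaviour is set by $\frac{1}{n}\sum_k(1+\lambda k^4)^{-2}$. Approximating this sum by an integral and substituting $u=\lambda^{1/4}k$ yields $\frac{1}{n}\sum_k(1+\lambda k^4)^{-2}\sim\frac{c_0}{n}\lambda^{-1/4}$ with $c_0=2\int_0^\infty(1+u^4)^{-2}\,du<\infty$, so the variance is of exact order $n^{-1}\lambda^{-1/4}$. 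For the bias I would use the elementary bound $\lambda^2k^8/(1+\lambda k^4)^2\le\lambda k^4$, which gives
\[
\sum_k|u_k|^2\frac{\lambda^2k^8}{(1+\lambda k^4)^2}\le\lambda\sum_k k^4|u_k|^2=\lambda\,\beta(f),
\]
finite precisely under the working hypothesis $\int|f''|^2\,dx<\infty$ that is built into the penalty. To see this order is attained rather than merely dominated, I would work over the Sobolev ball $\{f:\sum_k k^4|u_k|^2\le L\}$ and concentrate the spectral mass near the critical frequency $k\sim\lambda^{-1/4}$, where $\lambda k^4\approx1$ and the factor $\lambda^2k^8/(1+\lambda k^4)^2$ is bounded below by a constant; this forces the worst-case squared bias to be of exact order $\lambda$, so the bias rate is $\asymp\lambda$.

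Finally I would carry out the balancing. Writing $\textnormal{MISE}(\lambda)\asymp c_1\lambda+c_2 n^{-1}\lambda^{-1/4}$ and differentiating in $\lambda$, the stationarity condition $c_1=\tfrac14 c_2 n^{-1}\lambda^{-5/4}$ gives $\lambda_{\mathrm{opt}}\asymp n^{-4/5}$; substituting back, both terms are of order $n^{-4/5}$, so $\textnormal{MISE}_{\mathrm{opt}}\asymp n^{-4/5}$, matching the classical kernel rate obtained in (11).

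I expect the genuine obstacle to be the bias term. The crude upper bound $\lambda\,\beta(f)$ is immediate, but showing that the rate is sharply $\lambda$ and not faster is delicate, because for any single \emph{fixed} density with $\sum_k k^4|u_k|^2<\infty$ a dominated-convergence argument only yields $o(\lambda)$. The correct reading of ``optimal rate'' is therefore the minimax rate over the Sobolev class determined by the penalty, which is exactly why the concentration-of-spectral-mass argument, rather than a fixed-$f$ expansion, is the essential step. A secondary technical point is controlling the Euler--Maclaurin error in the Riemann-sum approximation of the variance sum.
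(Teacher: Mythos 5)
Your proposal follows essentially the same route as the paper: integrate the pointwise MSE of Theorem 3, use the orthogonality of $e^{-i(k-k')x}$ on $[0,2\pi)$ to collapse the double sums onto the diagonal, identify the variance term as of order $n^{-1}\lambda^{-1/4}$ and the squared-bias term as of order $\lambda$, and balance the two to get $\lambda_n\asymp n^{-4/5}$ and hence $\textnormal{MISE}\asymp n^{-4/5}$. The only substantive difference is that you treat the bias more carefully than the paper, which simply asserts the rate $\lambda$ from the representation $\lambda B\sum_k\frac{\lambda k^4}{(1+\lambda k^4)^2}p_k$; your observation that a fixed density with $\sum_k k^4|u_k|^2<\infty$ actually gives $o(\lambda)$, so that sharpness must be read in the minimax sense over the Sobolev ball, fills a gap the paper leaves open.
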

\begin{proof}
Using the expression for the mean square error, we calculate the integrated mean square error as,
\[\begin{split}
\int_0^{2\pi}\textnormal{E}\big|\hat{f}_n(x) - f(x)\big|^2dx &=  \sum_k\sum_{k'}u_{k}\bar{u}_{k'}(C_{k}(\lambda) - 1)(C_{k'}(\lambda) - 1)\int_0^{2\pi}e^{-i(k-k')x}dx\\
&+ \sum_k\sum_{k'}\frac{C_{k}(\lambda)C_{k'}(\lambda)}{n}\left(u_{k-k'} - u_k\bar{u}_{k'} \right)\int_0^{2\pi}e^{-i(k-k')x}dx\\
&=2\pi\sum_ku_{k}\bar{u}_{k}(C_{k}(\lambda) - 1)^2 + 2\pi\sum_k\frac{C_k(\lambda)^2}{n}(1 - u_k\bar{u}_k)\\
&=2\pi\sum_k|u_{k}|^2(C_{k}(\lambda) - 1)^2 + 2\pi\sum_kC_k(\lambda)^2\frac{(1 - |u_{k}|^2)}{n}\\
\end{split}
\]
First note that if $\lambda \to 0, C_k(\lambda) \to 1.$ which implies the first term goes to zero while the second term diverges to infinity. Thus we have with us a situation which is similar to the bias-variance trade off situations. In order to find the optimal rate we to equate the rates of these two terms. Now note that the 2nd term of the above expression can be written as 
\[2\pi\sum_kC_k(\lambda)^2\frac{(1 - |u_{k}|^2)}{n} = \frac{\sum_kC_k(\lambda)^2}{n}A,
\]
where
\begin{equation}
A = 2\pi\sum_kC_k(\lambda)^2\frac{(1 - |u_{k}|^2)}{\sum_jC_j(\lambda)^2}
\end{equation}
which can be thought of an expectation of a random variable which takes values $2\pi(1 - |u_k|)^2$ with probability, $\frac{C_k(\lambda)^2}{\sum_kC_k(\lambda)^2}$. Since this is bounded, the rate depends on $\frac{\sum_kC_k(\lambda)^2}{n}$, which is clearly $1/\left(n\lambda^{1/4}\right)$.
Similarly when we work with the first term, we see that
\[\sum_k(C_k(\lambda) - 1)^2|u_k|^2 = \sum_k\frac{\lambda^2k^4}{(1+\lambda k^4)^2} k^4|u_k|^2 = \sum_k k^4|u_k|^2 \times\sum_k\frac{\lambda^2k^4}{(1+\lambda k^4)^2} p_k
\]
where
\begin{equation}
p_k = \frac{k^4|u_k|^2}{\sum_k k^4|u_k|^2}
\end{equation}
Note that the term $\sum_k k^4|u_k|^2 = B$(say) is an approximation to $\int f^{''}(x)dx$. Thus the 1st term can be written as
\begin{equation}
2\pi\sum_k(C_k(\lambda) - 1)^2|u_k|^2 = \lambda\times B\sum_k\frac{\lambda k^4}{(1+\lambda k^4)^2} p_k
\end{equation}
which has a decay rate of $\lambda$. Thus the optimal rate is obtained by equating these two rates.
\begin{equation}
\frac{1}{n\lambda_n^{1/4}} = \lambda_n \Rightarrow \lambda_n = n^{-4/5}
\end{equation}
Hence Proved.
\begin{flushright}$\blacksquare$\end{flushright}
\end{proof}
Furthermore, the optimal MISE can be expressed as,
\begin{equation}
MISE_{opt}(\hat{f}) = 2\pi\sum_k|\hat{u}_{k}|^2(C_{k}(\hat\lambda) - 1)^2 + 2\pi\sum_kC_k(\hat\lambda)^2\frac{(1 - |\hat{u}_{k}|^2)}{n}
\end{equation}
where $\hat\lambda$ is obtained numerically over by a grid search which minimizes the above equation.
With this methodology, we are able to estimate the density function at a point $x$ using the optimal penalty as $\hat\lambda$, obtained by the above mentioned procedure. Before we show the simulated results on the various different samples, we briefly describe the methodology for using the $\cos^2$ kernel.

\subsection{Using $\cos^2$ Kernel}
Let $Z_1, Z_2, \ldots, Z_n$ be the observed data on $S^1$. Let $K$ denote a smoothing distribution on $S^1$. Let $\xi_1, \xi_2, \ldots, \xi_n$ be i.i.d. $K$. Define $Z_1' = Z_1\xi_1, Z_2' = Z_2\xi_2, \ldots, Z_n' = Z_n\xi_n$. Note that
\[P(Z\xi \in A) = \frac{1}{n}\sum_iP(\xi \in \bar{Z_i}A) = \frac{1}{n}\sum_i\int_{\bar{Z_i}A}K(u)d\mu(u)
\]
Thus the density estimate at some point $a$ is given by $\frac{1}{n}\sum_iK(\bar{Z_i}a)$. Note that we can transform each $Z_i$ to $\theta_i \in [0,2\pi)$ and each point $a$ to $e^{ix}$. Thus for each $x \in [0,2\pi)$ the density estimate is $\frac{1}{n}\sum_jK(e^{i(x - \theta_j})$. Let $\tilde{K}(x) = K(e^{ix})$. One such choice of $\tilde{K}(x)$ can be defined as follows
\[\tilde{K}(x) = \begin{cases}
\frac{2m\cos^2(mx)}{\pi} & \textnormal{if } |mx| \leq \frac{\pi}{2}\\
0 & \textnormal{otherwise}\\
\end{cases}
\]
Since note that $\int_{-\pi/2m}^{\pi/2m}\cos^2(mx) = \frac{\pi}{2m}$. Thus the density estimate at point $x$ boils down to
\[\hat{f_n}(x) = \frac{1}{n}\times\frac{2m}{\pi}\sum_{i=1}^{n}\cos^2\{m(x - \theta_i)\}I_{|m(a-\theta_i)|\leq \pi/2}
\]
Another estimate using inverse Fourier transformation, i.e. for $Z\xi_m$ we get,
\[f(t) = \frac{1}{2\pi}\sum_{k = -\infty}^{\infty}e^{-itk}\left(\frac{1}{n}\sum_{i = 1}^{n}Z_i^k\right) E(\xi_m^k)
\] where the expectation is calculated through the cosine density.

\subsection{Mean Square Error Calculations}
We have performed detailed comparative simulation study to see which method produces the best results. The methodologies followed for the comparative study are as follows :
\begin{itemize}
\item Using Fourier Spline Methodology
\item Using the $\cos^2$ kernel using plug-in technique for bandwidth
\item Using usual KDE with cross-validation technique for bandwidth selection
\item Using usual KDE with plug-in technique for bandwidth selection
\end{itemize}

We have considered a wrapped mixture normal model on the circle with two modes at $\theta$ and $-\theta$, where $\theta \in [0,\pi/2)$. We take $n = 50$ observations and calculate the density using the four methodologies. The Epanechnikov kernel has been used in the usual KDE methodologies. Since in the method based on Fourier Spline, the smoothing parameter $\lambda$ for the kernel, does not work as the usual linear scaling parameter, we consider the $\cos^2$ kernel as an alternative comparison by using the usual plug-in estimate. We repeat this procedure for $N = 10^6$ times and calculate the mean squared errors. All simulations have been performed on MATLAB 7.8.0. The results of the simulation study are shown in Table 2.

It can be seen that when we are considering a bi-modal continuous distribution, the estimate obtained through the Fourier Spline has a smaller MSE when compared to all other methods. 

\begin{table}[h]
\centering
\caption{Mean Square Errors at $x = 0$}
\begin{tabular}{|@{\hspace{0.25cm}}c@{\hspace{0.25cm}}|@{\hspace{0.25cm}}c@{\hspace{0.25cm}}|@{\hspace{0.25cm}}c@{\hspace{0.25cm}}|@{\hspace{0.25cm}}c@{\hspace{0.25cm}}|@{\hspace{0.25cm}}c@{\hspace{0.25cm}}|}\hline
$\theta$ & Fourier & $\cos^2$ & KDE Cross- & KDE \\
Values & Spline & Kernel & Validation & Plug-in\\
\hline
0.10&0.0041&0.0056&0.0086&0.0101\\
0.15&0.0033&0.0049&0.0069&0.0077\\
0.20&0.0041&0.0053&0.0116&0.0103\\
0.25&0.0028&0.0038&0.0086&0.0105\\
0.30&0.0029&0.0037&0.0087&0.0107\\
0.35&0.0040&0.0053&0.0099&0.0085\\
0.40&0.0033&0.0042&0.0119&0.0095\\
0.45&0.0022&0.0028&0.0061&0.0083\\
0.50&0.0023&0.0029&0.0070&0.0042\\
0.55&0.0028&0.0034&0.0080&0.0094\\
0.60&0.0033&0.0037&0.0095&0.0100\\
0.65&0.0025&0.0029&0.0063&0.0055\\
0.70&0.0018&0.0023&0.0044&0.0067\\
0.75&0.0021&0.0021&0.0057&0.0060\\
0.80&0.0022&0.0025&0.0047&0.0071\\
0.85&0.0019&0.0020&0.0046&0.0074\\
0.90&0.0002&0.0019&0.0047&0.0059\\
0.95&0.0027&0.0027&0.0057&0.0087\\
1.00&0.0024&0.0025&0.0058&0.0101\\
1.05&0.0018&0.0019&0.0028&0.0047\\
1.10&0.0016&0.0017&0.0040&0.0041\\
1.15&0.0018&0.0018&0.0041&0.0101\\
1.20&0.0014&0.0013&0.0023&0.0043\\
1.25&0.0017&0.0018&0.0026&0.0025\\
1.30&0.0014&0.0015&0.0022&0.0029\\
1.35&0.0015&0.0016&0.0023&0.0023\\
1.40&0.0015&0.0016&0.0019&0.0026\\
1.45&0.0013&0.0012&0.0015&0.0028\\
1.50&0.0016&0.0017&0.0018&0.0039\\
1.55&0.0015&0.0015&0.0017&0.0100\\
\hline
\end{tabular}

\end{table}

\section{Detection of Local Features}
Let $\mathcal{T} = [0, 2\pi) = \bigcup A_{nj}$ where $A_{nj} = \left[\frac{2\pi j}{2^n},\frac{2\pi(j+1)}{2^n} \right)$ for $j = 0,1,\ldots, 2^n - 1$ is a partition of the circle. Note that $2^nA_{nj} - 2\pi j = [0, 2\pi)$. Now for $y \in [0,1)$, $ f\left(\frac{2\pi(j+y)}{2^n}\right)$ is the local information of $f$ on $A_{nj}$. $\mathbb{P}_n$ is a measure on $[0, 2\pi)$. Let the sample from $A_{nj}$ be denoted by $S_{nj}$ and $S_n = \bigcup_{j=0}^{2^n-1} S_{nj}$. Thus $2^nS_{nj} - 2\pi j$ maps $S_{nj}$ to $[0, 2\pi)$. As $n$ becomes larger the information becomes more and more local. 
	
As mentioned before, we can see that because of Theorem 2, the localisation property is lost as the density is calculated by a weighted average of these local information. Hence the usual Fourier Spline estimate does not give a good estimate of the local features. Thus, in this section we develop the methodology of detecting the local features. Estimation of the local features is explained in details in Section 4.
	
	We use the information present in each $A_{nj}$ to detect the presence of discontinuity or edges. We shall now delve into the details of those procedure starting with the generalised testing procedure. The distributional details for testing specific local features are explained subsequently.

\subsection{The Testing Procedure}
We first fix an $n$ large enough. Suppose we want to test
\[ H_0 : \textnormal{There is some local feature in } A_{nj} \textnormal{ vs. } H_1 : H_0 \textnormal{ is false}
\]

The exact testing procedure will depend on the local feature that we are trying to determine and is explained later in sub-sections 3.2 - 3.3. Suppose $T$ be the test statistic, and based on this we can find the $p$-value of the test, let us denote that by $p_{A_{nj}}$. Note that the same test can be performed on the previous layer, that is, on the partition formed by $A_{(n-1)j}$ and subsequently on earlier layers. Thus we have a nested layers culminating in the last layer. It might be possible to get contradictions, for e.g. the test is rejected at partition level $n$, it might be possible that the test is accepted at partition level $n-1$. So, we can consider having a multivariate $p$-value profile, where the $i$-th co-ordinate is the $p$-value corresponding to testing on $A_{ij}$. This poses a problem of inferring from a multivariate $p$-value.

We suggest a procedure for the same. Note that if we had a single $p$-value for all the units in the partition we could perform the Holm's procedure of multiple testing \cite{Holms} to give us the regions which contain local features at a some level $\alpha$. However instead of a single $p$-value, for each unit in the partition we have a multivariate $p$-value profile, say $(p_1,...,p_k)'$. Now under null, the $p$-value usually follows a uniform distribution, but under the alternative, it can be modelled well using the $Beta(\alpha,\beta)$ distribution with a high value for $\alpha$. Thus we can assume, $p_1, \ldots, p_k$ are i.i.d $Beta(\alpha,\beta)$. Thus an obvious sufficient statistic is the geometric mean, i.e. $(\prod_{i=1}^kp_i)^{1/n}$. However instead of using the usual geometric mean in the Holms procedure, we use a weighted geometric mean, where the weights are chosen to be in a decreasing sequence with the increase in the partition level. We came to this conclusion after extensive simulation study. That is, for each $j$, we have with us a $p$-value profile $(p_1, \ldots, p_k)$, and we consider the sufficient statistic as 
\begin{equation}
P_j = \prod_{i=1}^{k}p_i^{w_i}
\end{equation}
where $\sum_{i=1}^kw_i = 1$ and $\left\lbrace w_i\right\rbrace_{1 \leq i \leq k} $ is a decreasing sequence, for $j = 0,\ldots, 2^n - 1$. Using, these $P_0, \ldots, P_{2^n - 1}$ we apply the Holm's procedure which can be explained as follows.
\subsubsection{3.1.1 Holm's Procedure}
The Holm procedure \cite{TSH} can conveniently be stated in terms of the $p$-values $P_0, \ldots, P_{2^n - 1}$ of the $2^n = s$(say) individual tests. Let the ordered $p$-values be denoted by $P_{(1)} \leq \ldots \leq P_{(s)}$, and the associated hypotheses by $H_{(1)}, \ldots , H_{(s)}$. Then the Holm procedure is defined stepwise as follows:
\\\\
\textit{Step 1.} If $P_{(1)} \geq \alpha/s$, accept $H_{(1)}, \ldots , H_{(s)}$ and stop. If $P_{(1)} < \alpha/s$ reject $H_{(1)}$ and test the remaining $s - 1$ hypotheses at level $\alpha/(s - 1)$.
\\\\
\textit{Step 2.} If $P_{(1)} < \alpha/s$ but $P_{(2)} \geq \alpha/(s - 1)$, accept $H_{(2)}, \ldots , H_{(s)}$ and stop. If $P_{(1)} < \alpha/s$ and $P_{(2)} < \alpha/(s - 1)$, reject $H_{(2)}$ in addition to $H_{(1)}$ and test the remaining $s - 2$ hypotheses at level $\alpha/(s - 2)$. And so on.
\\\\

Thus by this procedure we shall be able to detect the exact regions which contain the local features. Now we describe the exact tests for detecting the individual local features starting with support and outlier detection.

\subsection{Support and Outlier Detection}
This is the first step towards the detection of local features. Note that we have with us the data which is most likely following a mixture distribution from a discrete and a continuous distribution. So it is likely to get a few discrete observations on the edges of the support. Our aim is to find all such positions so that we can eliminate such outliers and then later incorporate them again to give the final estimate. 

Now when we consider each $A_{nj}$, observe that if the support begins at some point $x_0 \in A_{nj}$ (without loss of generality we consider the support to be the right of $x_0$) then we expect a very small number of points to the left of $x_0$, while an abundance of points to the right of $x_0$. Thus we could model the left of $x_0$ by $B(n,c/n)$ which converges to a Poisson number of points, while the right side follows a multinomial distribution, with probability directly proportional of the length of the arc. Note that here the optimal test would be testing $H_0 : \lambda = 0$ vs $H_1: \lambda > 0$ (where $\lambda$ is the parameter of the Poisson Model). However, testing this situation is very risky, because $\lambda = 0$ denotes a degenerate distribution at 0. Hence, if somehow we get a single value to the left of $x_0$, our test is rejected. In order to go around this situation, we consider the $B(n,c/n)$ and test for $H_0 : c \leq 1$ vs. $H_1 : c > 1$. 

Hence our algorithm for testing
$ H_0 : \textnormal{End of upport lies in } A_{nj} \textnormal{ vs. } H_1 : H_0 \textnormal{ is false},
$ is as follows : 
\begin{description}
\item[Step 1 :] Denote $x_0$ as the centre of $A_{nj}$
\item[Step 2 :] Count the no. of obsevations to the left of $x_0$, call it $T$.
\item[Step 3 :] Accept the Null hypothesis if $T \leq 1$, otherwise reject it.
\item[Step 4 :] The $p$-value of the test is $1 - F(T)$, where $F$ is the cumulative probability distribution for $B(n,1/n)$
\end{description}
Note that this exact same test works for testing $ H_0 : \textnormal{There exists outliers in } A_{nj} \textnormal{ vs. } H_1 : H_0 \textnormal{ is false}.$ Note that we do this test for every partition layer and using the procedure mentioned in section 3.1 we finally get in the which units of the partition, the test is rejected at level $\alpha$. Thus, with  this information, we are able to identify the exact region of the support and the points of outliers. So, we keep track of such regions and remove them from the data. Thus now we work in the support of the distribution and proceed to the detection of discontinuities and edges.

\subsection{Discontinuity and Edge Detection}
Note that the regions of discontinuity or edges can be suitably modelled using the exponential family,
\begin{equation}
f(x|\boldsymbol{\beta}) \propto \exp\left\lbrace\beta_0 x + \beta_1I(x>x_0) + \beta_2(x - x_0)_+\right\rbrace
\end{equation}
where $x_0$ is the mid point of the region $A_{nj}$ in question, $I$ is the indicator function and $(\cdot)_+$ is a positive function which take the argument value if it is positive and 0 otherwise. Note that here $\beta_1$ is the parameter which controls the jump discontinuity, while $\beta_2$ controls the edge effects.
We first test the existence of discontinuity in a particular unit of the partition, say $A_{nj}$. If, there is no discontinuity we proceed to testing if the $A_{nj}$ has an edge or not. 
Thus the algorithm for testing the presence of discontinuity or edge is as follows:
\begin{description}
\item[Step 1 :] We first test $H_0 : $ There is no discontinuity in $A_{nj}$ vs. $H_1 : H_0$ is false. This is equivalent to testing $H_0 : \beta_1 = 0$ vs $H_1 : \beta_1 \not= 0$.
\item[Step 2 :] We proceed by the usual Likelihood ratio test for exponential family. Our test statistic is 
\begin{equation}
\Lambda = \frac{\sup_{\beta \in H_0} \prod_{x_i \in S_{nj}} f(x_i|
\boldsymbol{\beta})}{\sup_{\beta \in H_0 \bigcup H_1} \prod_{x_i \in S_{nj}}f(x_i|\boldsymbol{\beta})}
\end{equation} 
Note that the supremum is easily obtained by using the MLE of the parameters which are easy to obtain experimentally since both the numerator and the denominator belong to the exponential family. \cite{TPE}
\item[Step 3 : ] Now $T = -2\log\Lambda$ follows a Chi-Squares distribution with degrees of freedom $3 - 2 = 1$. Thus, we reject the test if $T > \mathcal{X}_{1;\alpha/2}$, where $\mathcal{X}_{1;\alpha/2}$ denotes the $(100 - \alpha/2)$-th percentile of the chi-squares distribution with degrees of freedom 1.
\item[Step 4 : ] If the test is rejected, that is, there is discontinuity in this region, we stop, and report the $p$-value as $1 - F(T)$, where $F$ is the cumulative probability distribution of the chi-squares with degrees of freedom 1. If no discontinuity is detected we proceed to Step 5.
\item[Step 5 : ] Now we test $H_0 : $ There is no edge in $A_{nj}$ vs. $H_1 : H_0$ is false. This is equivalent to testing $H_0 : \beta_2 = 0$ vs $H_1 : \beta_2 \not= 0$. 
\item[Step 6 : ] We proceed exactly in same way as in Steps 2 and 3. 
\item[Step 7 : ] Finally we report the $p$-value as $1 - F(T)$, where $F$ is the cumulative probability distribution of the chi-squares with degrees of freedom 1.
\end{description}

As before, we do this for every partition layer and using the procedure in Section 3.1, we finally detection the regions which contain discontinuity or edges. Thus we are able to break our support into compact intervals, which can be clearly categorised into sets containing local features such as discontinuity or edges and sets where the density is smooth. Using this knowledge we can proceed to the overall density estimate as will be explained in Section 4. 

\section{Overall Density Estimates}
In this section we describe how we obtain the overall density estimate using the idea of partition of unity. Note that using the methodology described in Section 3, we will be able to detect regions of local features, in the support of the density as well as regions of outliers. From these regions we can create an open cover of the support of the density. Thus based on the theory of Partition of Unity, we can find a partition using the function,
\begin{equation}
\rho(x) = \begin{cases}
\exp\left(-\tan^2\frac{\pi x}{2\sigma}\right) & -\sigma < x < \sigma\\
0 & \textnormal{otherwise}
\end{cases}
\end{equation}
Note that this function is smooth and takes positive values in $(-\sigma,\sigma)$ and 0 outside this support. Thus with appropriate choice of $\sigma$ we can always construct a partition of unity, say $\{\rho_i\}_{i \in I}$ using this function. We shall use this partition of unity to estimate the density in the concerned regions.

Note that as explained before, suppose we find $n$ regions which possess a local feature, and another region where the density is smooth. Let $\{U_i\}$ denote these open sets, and $\{\rho_i\}_{i \in I}$ denote the partition, such that supp $\rho_i \subseteq U_i$. Using this notation, our whole density can be partitioned as
\begin{equation}
f(x) \propto \sum_{i = 1}^{n}\rho_i(x)g(x) + \left(1 - \sum_{i=1}^{n}\rho_i(x)\right)h(x)
\end{equation}
where each $\rho_i(x)g(x)$ estimates the local feature in the set $U_i$, while $\left(1 - \sum_{i=1}^{n}\rho_i(x)\right)h(x)$ corresponds to the smooth region which is estimated using the Fourier Spline technique as explained in Section 2. Note that $\rho_i$ takes the value 0 ourside $U_i$, thus we can find the normalising constant by integration. Thus our final density estimate is
\begin{equation}
f(x) = \frac{\sum_{i = 1}^{n}\rho_i(x)g(x) + \left(1 - \sum_{i=1}^{n}\rho_i(x)\right)h(x)}{\int_0^{2\pi}\sum_{i = 1}^{n}\rho_i(x)g(x) + \left(1 - \sum_{i=1}^{n}\rho_i(x)\right)h(x)\; dx}
\end{equation}
Thus we start by estimating the local feature.
\subsection{Estimation of Local Feature}
We have with us a region which is the support of $\rho_i$ and we wish to estimate the local feature $\rho_i(x)g(x)$. Consider all the points lying in the support of $\rho_i$, i.e. consider all $x \in U_i$. We fit the local exponential model given by,
\begin{equation}
g(x|\boldsymbol{\beta}) = \frac{\exp\left\lbrace\beta_0 x + \beta_1I(x>x_0) + \beta_2(x - x_0)_+\right\rbrace}{\int_{U_i}\exp\left\lbrace\beta_0 x + \beta_1I(x>x_0) + \beta_2(x - x_0)_+\right\rbrace}
\end{equation}
using the data in $U_i$. Let $\int_{U_i}\exp\left\lbrace\beta_0 x + \beta_1I(x>x_0) + \beta_2(x - x_0)_+\right\rbrace = A(\beta_0,\beta_1,\beta_2)$, (say). Then the log-likelihood function can be written as 
\begin{equation}
l(\beta_0,\beta_1,\beta_2) = \beta_0\sum_{x \in U_i}x + \beta_1\sum_{x \in U_i}I(x > x_0) + \beta_2\sum_{x \in U_i}(x - x_0)_+ - n\log A(\beta_0,\beta_1,\beta_2)
\end{equation}
where $n$ is the cardinality of $U_i$. From this, the maximum likelihood estimates of $(\beta_0,\beta_1,\beta_2)$ can be obtained, numerically. Thus, the final estimate for the local feature at $U_i$ is 
\begin{equation}
\rho_i(x)\times\frac{\exp\left\lbrace\hat\beta_0 x + \hat\beta_1I(x>x_0) + \hat\beta_2(x - x_0)_+\right\rbrace}{A(\hat\beta_0,\hat\beta_1,\hat\beta_2)}
\end{equation}
where $(\hat\beta_0,\hat\beta_1,\hat\beta_2)$ denotes the maximum likelihood of the parameters. This procedure is repeated for all the regions which have been detected to possess local features. Now we proceed to adapting the method based on Fourier Splines to estimate the smooth portion of the density.

\subsection{Estimation of Smooth Feature}
We are left with estimating the smooth part of the density, i.e. $\left(1 - \sum_{i=1}^{n}\rho_i(x)\right)f(x)$. Suppose we have the data as $Z_1, Z_2, \ldots, Z_m$ lying on the circle. Treating the required function as a smooth function which we need to estimate, we can see that the empirical Fourier coefficient is given by,
\begin{equation}
F_m(k) = \int_{0}^{2\pi}e^{ikx}\left(1 - \sum_{i=1}^{n}\rho_i(x)\right)f(x)\;dx = \int_{0}^{2\pi}e^{ikx}\left(1 - \sum_{i=1}^{n}\rho_i(x)\right)\;dF_m(x)
\end{equation}
Thus, this simplifies to,
\begin{equation}
\hat{u}_k  = \frac{1}{m} \sum_{j=1}^{m}Z_j^k\left(1 - \sum_{i=1}^{n}\rho_i(Z_j)\right)
\end{equation}
Thus, instead of the usual average of the $Z_i^k$ we get a weighted average of $Z_i^k$. The sample moment, in this situation becomes weighted sample moments, where the weights are proportional to where the data is coming from. If $Z_j$ comes from a region which has a local feature, say $U_i$, then its weight in the smooth region decreases from $1/m$ to $(1 - \rho_i(Z_j))/m$. On the other hand, if $Z_j$ belongs to the smooth region, then by the definition of the partition $\{\rho_i\}$ we have, $\rho_i(Z_j) = 0\; \forall\; i \in [1,n]$. Thus the weight remains as $1/m$. 

Now following the same procedure as in Section 2 we get the estimate of the smooth function as
\begin{equation}
\hat{f}_m(x) = 1 + \sum_{|k| = 1}^{m/2} \hat{u}_kC_k(\lambda) e^{-ikx}
\end{equation}
where 
\[
\hat{u}_k = \frac{1}{m} \sum_{j=1}^{m}Z_j^k\left(1 - \sum_{i=1}^{n}\rho_i(Z_j)\right)\;\; \textnormal{and}\;\; C_k(\lambda) = \frac{1}{1 + \lambda k^4}
\]
The optimal choice of $\lambda$ is obtained by minimizing the MISE given by equation (37). Thus the final estimate is 
\begin{equation}
\hat{f}(x) = \frac{\sum_{i=1}^{n}\rho_i(x)\hat{g}_i(x) + \hat{f}_m(x)}{\int_{0}^{2\pi}\sum_{i=1}^{n}\rho_i(x)\hat{g}_i(x) + \hat{f}_m(x)}
\end{equation}
where $\rho_i(x)\hat{g}_i(x)$ and $\hat{f}_m(x)$ is obtained from (46) and (49). This completes the theory of estimation of density on the circle, taking into account the local features. We now proceed to the section on simulations to highlight how our methodology works in different situations.

\section{Simulation Study}
In this Section we apply our methodology, on three test data sets, each showing a particular aspect of our Methodology. The Usual Kernel Density estimates are also applied on the data sets, and a comparative study of their MISE is provided.
\subsection{Density with Discontinuity and Edge Effects}
We use the same density which we used to portray the existence of such local features in Section 1. Our data is coming from a mixture distribution of a Uniform and a Triangular Density, with equal mixing proportions. Thus, we consider $X_1, X_2, \ldots, X_n$ coming from a equal mixture of $Unif\left(\frac{-3\pi}{4},\frac{-\pi}{4}\right)$ and $Triangular\left(\frac{\pi}{4},	\frac{3\pi}{4}\right)$ with a peak at $\frac{\pi}{2}$. We take $n = 1000$ and execute the algorithm, starting with the detection of support and outliers. 

	We have broken the circle into $2^9$ regions and we work with the data on these intervals. First the detection of support, gives us almost the exact regions, starting from a little left of $\frac{-3\pi}{4}$ to a little right of $\frac{-\pi}{4}$, and  from a little left of $\frac{\pi}{4}$ to a little right of $\frac{3\pi}{4}$. We then proceed to the detection of discontinuities and edges. We get two points of discontinuity and two edge points. Breaking the data into the rest of the region, we estimate the smooth region using the Fourier Spline technique, and fit the exponential model to estimate the regions showing the local features. The final estimated density is shown in Figure 5 along with the usual kernel density estimate. The whole procedure is repeated $10^5$ to find an estimate of the mean integrated squared error. We also calculate the MISE of the estimates of the local features, which is tabulated in Table 3.

\begin{figure}[h]
\centering
\includegraphics[scale=0.55]{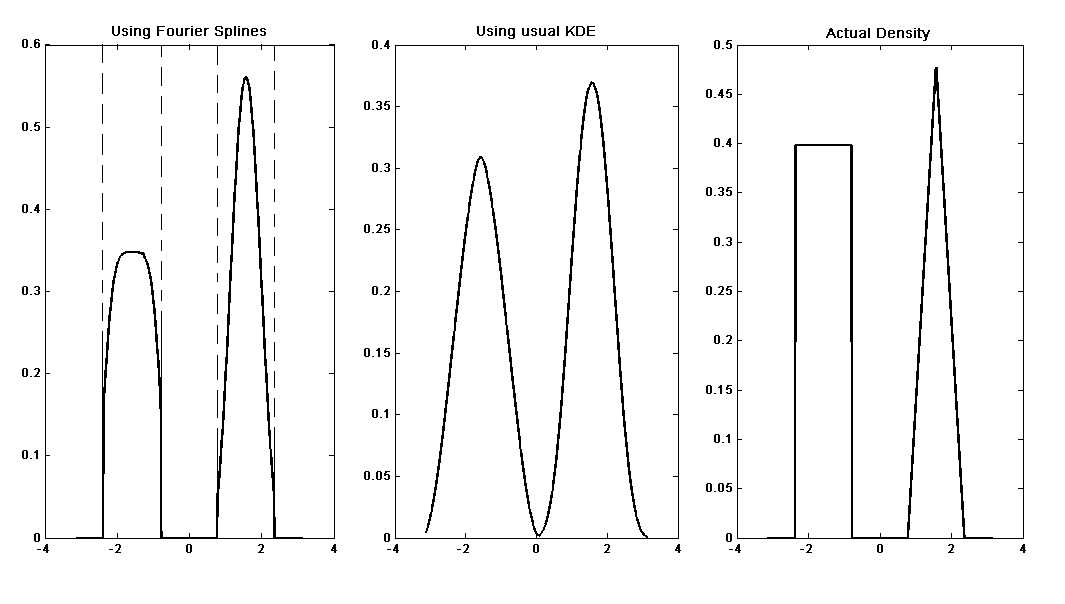}
\caption{The Fourier Spline Estimate along with the Kernel Density estimate and the Actual Density.}
\end{figure}
	
\begin{table}[h]
\centering
\caption{Mean Integrated Square Errors}
\begin{tabular}{|@{\hspace{0.25cm}}c@{\hspace{0.25cm}}|@{\hspace{0.25cm}}c@{\hspace{0.25cm}}|@{\hspace{0.25cm}}c@{\hspace{0.25cm}}|@{\hspace{0.25cm}}c@{\hspace{0.25cm}}|@{\hspace{0.25cm}}c@{\hspace{0.25cm}}|}\hline
Estimates & Using Fourier & Using usual & Local Feature & Local Feature \\
&Spline  & KDE & at $-3\pi/4$ & at $-\pi/4$\\
\hline
MISE & 0.0058 & 0.0112 & 0.0049 & 0.0051\\
\hline
\end{tabular}
\end{table}

Now, note that, the support of the distribution has been very well identified in estimate through Fourier Splines. Furthermore, all points of discontinuities and edge effects has been correctly identified, allowing us to apply the smooth function estimation technique on the remaining set. The vertical dotted lines show the points detected as having local features. Finally we estimate the local features using the exponential model. The MISE of these local estimates are also small as seen from Table 3. Note that such positions of local features have not been identified in the estimate obtained from the standard KDE procedure. Thus, the estimate from the Fourier Spline is indeed much more closer to the actual density. This claim is further justified from Table 3, where we see that the MISE from the Fourier Spline is much smaller than the MISE from the usual KDE.	

\subsection{Mixture Density of a Continuous and Discrete Distribution}	
Here we consider data from a mixture density of a normal distribution with outliers coming from a discrete distribution. We have considered $X_1, X_2, \ldots X_n$ to come from a mixture of a normal distribution with a mode at $0$ truncated to $[-\frac{\pi}{2},\frac{\pi}{2})$ with probability $1 - \epsilon$, and with probability $\epsilon$ from a discrete distribution taking values $\{-3\pi/4 , 3\pi/4\}$ with equal probability. We have chosen $\epsilon = 0.05, n = 1000$ for our simulations. Exactly similar procedure is carried out to reach the final estimate shown in Figure 6, along with the usual kernel density estimates. The whole procedure is repeated $N = 10^5$ times to get an estimate of the mean integrated squared error. We also calculate the MISE of the estimates of the local features. Both results are tabulated in Table 4.

\begin{figure}[h]
\centering
\includegraphics[scale=0.55]{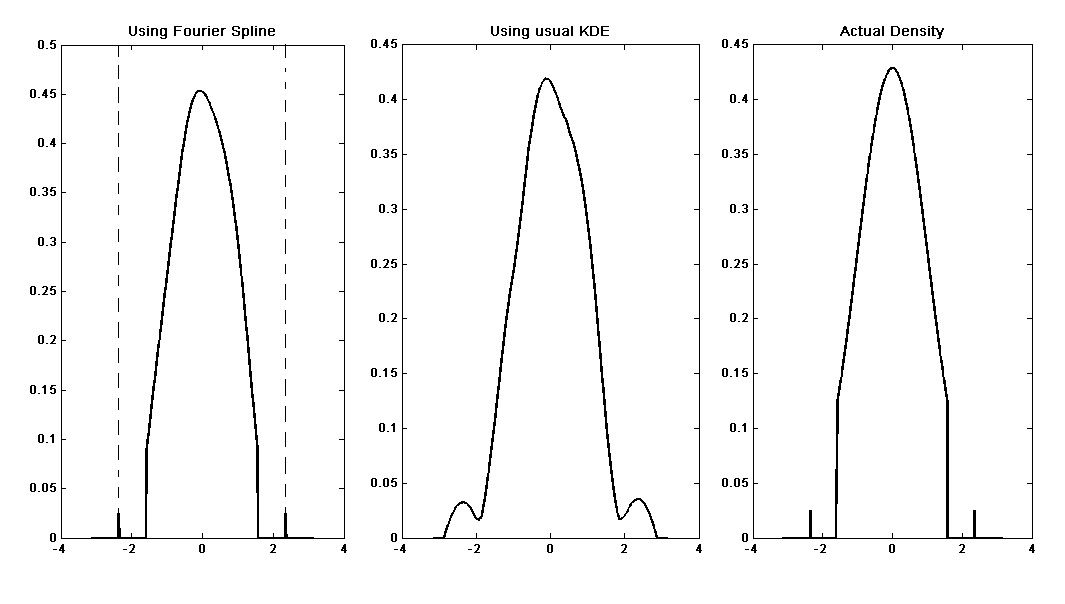}
\caption{The Fourier Spline Estimate along with the Kernel Density estimate and the Actual Density.}
\end{figure}

\begin{table}[h]
\centering
\caption{Mean Integrated Square Errors}
\begin{tabular}{|@{\hspace{0.25cm}}c@{\hspace{0.25cm}}|@{\hspace{0.25cm}}c@{\hspace{0.25cm}}|@{\hspace{0.25cm}}c@{\hspace{0.25cm}}|@{\hspace{0.25cm}}c@{\hspace{0.25cm}}|@{\hspace{0.25cm}}c@{\hspace{0.25cm}}|}\hline
Estimates & Using Fourier & Using usual & Local Feature & Local Feature \\
&Spline  & KDE & at $-3\pi/4$ & at $3\pi/4$\\
\hline
MISE & 0.0041 & 0.0145 & 0.0009 & 0.0010\\
\hline
\end{tabular}
\end{table}

As before, the support has been very well detected, which allows us to accurately estimate the smooth function, which is not the case from the KDE as seen from the second figure. Furthermore, the outliers has been almost perfectly detected as can be seen through the dotted lines in Figure 6. Those regions have been estimated through the local exponential family. Note that the presence of the outliers creates humps in the estimate from the KDE, which fails to identify the discrete nature. The very fact, which motivated us to work on this problem, and we have been able to successfully create a work about for the problem. As before we can see that our estimate through Fourier Spline, has a smaller MISE when compared to that from the standard KDE.

\subsection{Density with Several Discontinuities}
We consider another example where there are more than 2 discontinuities in order to check whether our methodology is successful in detecting them all or not. So we consider a data coming from a density which can be written as
\begin{equation}
f(x) = \begin{cases}
\frac{1}{\pi}	& \textnormal{ for } x \in \left[-\frac{\pi}{2},-\frac{\pi}{4}\right)\\
\frac{1}{2\pi}	& \textnormal{ for } x \in \left[-\frac{\pi}{4},\frac{\pi}{4}\right)\\
\frac{2}{\pi}	& \textnormal{ for } x \in \left[\frac{\pi}{4},\frac{\pi}{2}\right)
\end{cases}
\end{equation}
This density has four points of discontinuity, viz, $-\frac{\pi}{2}, -\frac{\pi}{4}, \frac{\pi}{4}$ and $\frac{\pi}{2}$. We draw $n = 1000$ observations from this density and perform the procedure as done in the last two examples. The final density estimate is given in Figure 7. The whole procedure is repeated for $N = 10^5$ times to 
get the mean integrated square error. MISE for the estimated local features are also calculated. Results of which are tabulated in Table 5. 

\begin{figure}[h]
\centering
\includegraphics[scale=0.415]{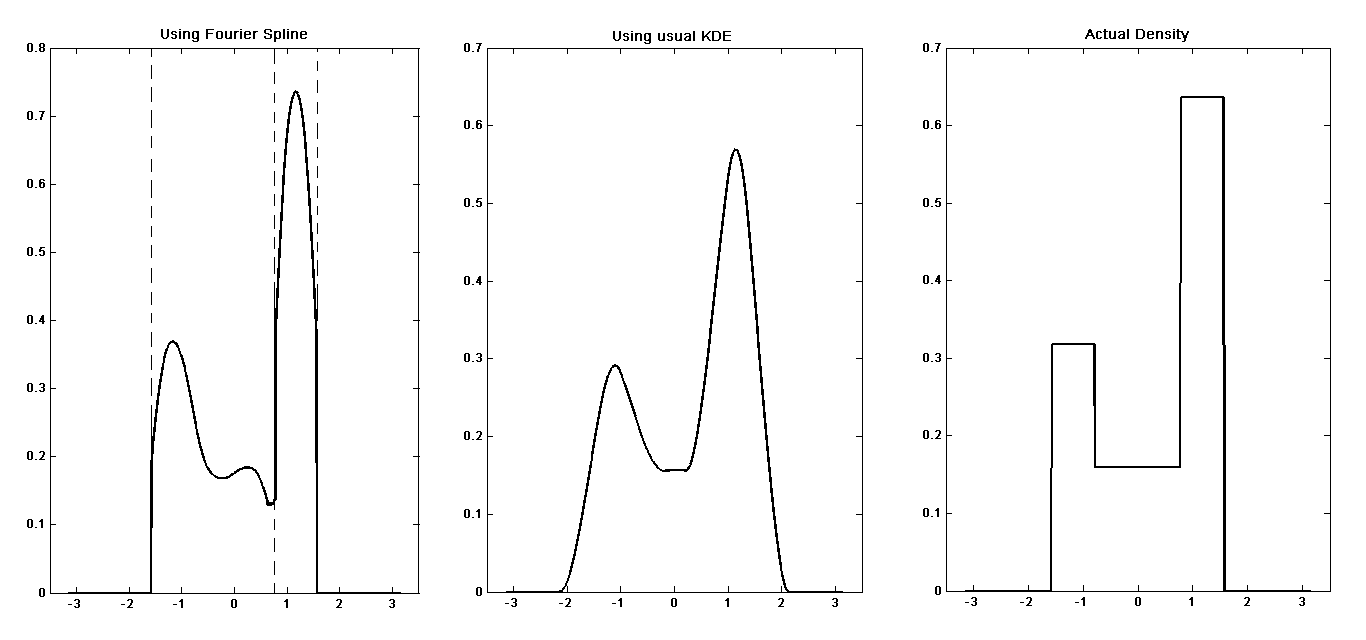}
\caption{The Fourier Spline Estimate along with the Kernel Density estimate and the Actual Density.}
\end{figure}

\begin{table}[h]
\centering
\caption{Mean Integrated Square Errors}
\begin{tabular}{|@{\hspace{0.25cm}}c@{\hspace{0.25cm}}|@{\hspace{0.25cm}}c@{\hspace{0.25cm}}|@{\hspace{0.25cm}}c@{\hspace{0.25cm}}|@{\hspace{0.25cm}}c@{\hspace{0.25cm}}|@{\hspace{0.25cm}}c@{\hspace{0.25cm}}|
@{\hspace{0.25cm}}c@{\hspace{0.25cm}}|}\hline
Estimates & Using Fourier & Using usual & Local Feature & Local Feature & Local Feature \\
&Spline  & KDE & at $-\pi/2$ & at $\pi/4$ & at $\pi/2$\\
\hline
MISE & 0.0036 & 0.0083 & 0.0020 & 0.0031 & 0.0027\\
\hline
\end{tabular}
\end{table}

Here too, the support of the distribution has been perfectly detected. Note further that the 3 dotted lines in the first figure of Figure 7, shows that our algorithm has successfully detected 3 local features, mainly discontinuities at $-\pi/2, \pi/4$ and $\pi/2$. However, it has failed to detect the discontinuity at $\pi/4$. This can be explained by the fact that the jump discontinuity in modulus is small at $-\pi/4$ when compared to the other jump discontinuities in the density. Thereby, its chances of detection is indeed very small. However, our algorithm has been able to detect discontinuities with a moderately high jump. 

Now on when we look at the estimate through the Kernel density estimation, we see that it has neither been able to detect the support nor the local features accurates. As a results we can see from Table 5 that the MISE of our method, here too, (although it has failed to detect a discontinuity) is far better than MISE of the estimate obtained through the standard method of Kernel Density Estimation.

 \section{Real Life Data}
Here we apply our methodology on a real life data obtained from Sengupta and Rao (1966) \cite{Sengupta}. They have provided the data coming from Cross-bedding Azimuths in the 3 Units of Kamthi River. This data is given in Table 6. We have used the data from Upper Kamthi, to show our Methodology. Since we are given grouped data, we have considered in each group they follow a uniform distribution, with minor rotation in each bin to get more uniformity. The histogram of the data, along with the Fourier Spline and standard KDE estimates are shown in Figure 8.

\begin{table}[h]
\centering
\caption{Frequency Distributions of Cross-bedding Azimuths in the 3 Units of Kamthi River (Sengupta and Rao, 1966)}
\begin{tabular}{|@{\hspace{0.25cm}}c@{\hspace{0.25cm}}|@{\hspace{0.25cm}}c@{\hspace{0.25cm}}|@{\hspace{0.25cm}}c@{\hspace{0.25cm}}|@{\hspace{0.25cm}}c@{\hspace{0.25cm}}|}\hline
Azimuth (in degrees) & Lower Kamthi & Middle Kamthi & Upper Kamthi\\
\hline
0 - 19&14&50&75\\
20 - 39&14&62&75\\
40 - 59&11&33&15\\
60 - 79&13&9&25\\
80 - 99&9&1&7\\
100 - 119&16&3&3\\
120 - 139&0&0&3\\
140 - 159&4&0&0\\
160 - 179&0&0&0\\
180 - 199&3&0&0\\
200 - 219&4&2&21\\
220 - 239&0&8&8\\
240 - 259&0&0&24\\
260 - 279&0&11&16\\
280 - 299&6&5&36\\
300 - 319&7&20&75\\
320 - 339&1&53&90\\
340 - 359&21&41&107\\
\hline
\end{tabular}
\end{table}

\begin{figure}[h]
\centering
\includegraphics[scale=0.434]{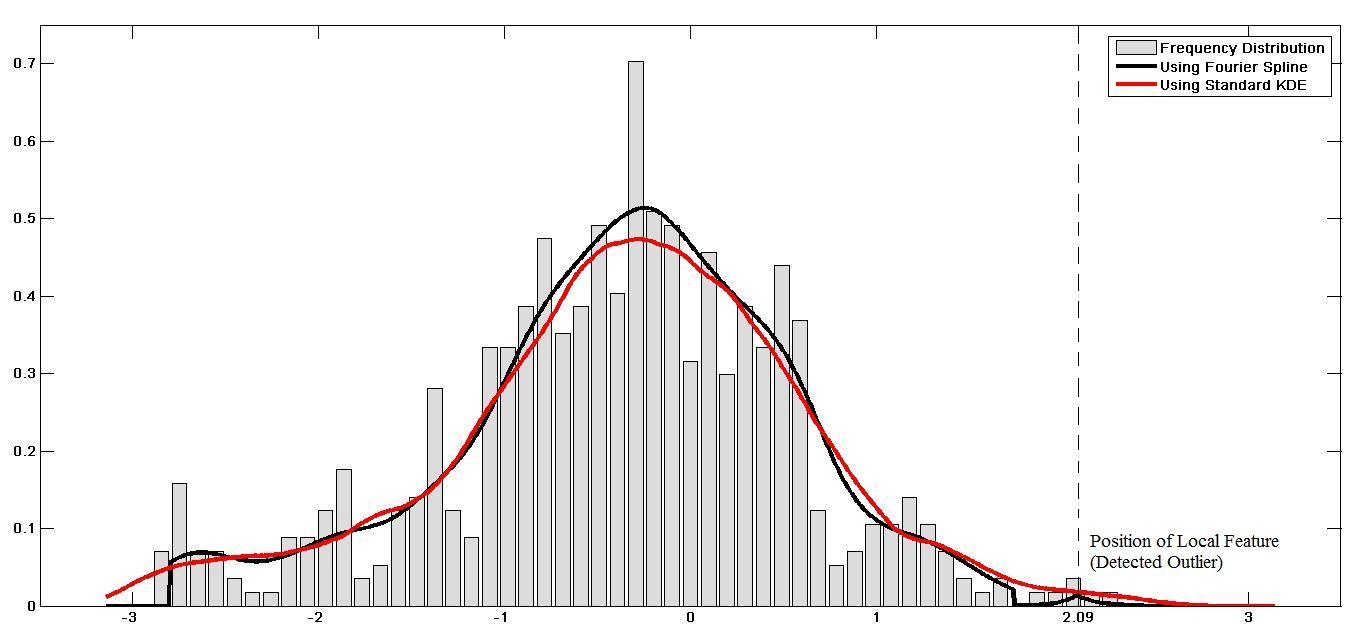}
\caption{Application of Fourier Spline Methodology on Real Data}
\end{figure}

In Figure 8, the gray line depicts the estimate from the standard KDE, while the black depicts the estimate from Fourier Spline. Here too we have been able to almost perfectly detect support. The extreme regions which have no data, has been correctly identified. Furthermore, few local features such as outliers has also been identified by the procedure at 2.09. As seen from the figure, our estimate is much more closer to the actual probability distribution than the standard KDE.

\section{Discussions}

\subsection{Regarding the Spline Smoothing on Circles.}
Note that although we have got an estimate of our density function by solving the Spline Smoothing on Circle, our solution is not the  exact solution because, the solution of the Spline Smoothing problem must belong to the space of non-negative definite sequence. If our solution was indeed non-negative definite, then by Bochner's Theorem, the corresponding Kernel would have been non-negative. But as seen from Figure 4  in Section 2, this is not the case. Thus, solving the Spline problem on circles with the non-negative definite constraint is one of the open problems which should influence researchers to further work in this field.

\subsection{Regarding Estimation of the Local Features}
In this article, we have estimated the local features using the exponential model, and then multiplied it with the corresponding function $\rho(x)$ where $\rho \in \{\rho_i\}_{i \in I}$ , a partition of unity, created on the set of open covers using the function
\begin{equation}
\rho(x) = \begin{cases}
\exp\left(-\tan^2\frac{\pi x}{2\sigma}\right) & -\sigma < x < \sigma\\
0 & \textnormal{otherwise}
\end{cases}
\end{equation}
for different choices of $\sigma$. Let us explain this in a bit more details. Suppose, we have detected a local feature in $A_{nj}$ (using the same notation as before), then $A_{nj}$ forms a unit of the open cover. The rest can be chosen so that they form an intersecting set. Now, suppose, $\rho_i$ is has its support in $A_{nj}$. So, in order to estimate the local feature, what we have done is we have first estimated the exponential model using the data in $A_{nj}$ and then multiplied that with $\rho_i$. The final estimates is done through proper normalisation. 

	Now instead of doing this, another way of estimating the local feature could be estimating the whole function $\rho_i(x) \times g(x)$, where $g(x)$ a density from the chosen exponential family. That is, we can find the optimal values of $(\beta_0,\beta_1,\beta_2)$ by maximizing the likelihood given by
	
\begin{equation}
L(\beta_0,\beta_1,\beta_2) = \prod_{x_j \in S_{nj}} \rho_i (x_j) \times \frac{\exp\left\lbrace\beta_0 x_j + \beta_1I(x_j>x_0) + \beta_2(x_j - x_0)_+\right\rbrace}{\int_{A_{nj}}\exp\left\lbrace\beta_0 x + \beta_1I(x>x_0) + \beta_2(x - x_0)_+\right\rbrace}
\end{equation}	
instead of maximising,
\begin{equation}
 L(\beta_0,\beta_1,\beta_2) = \prod_{x_j \in S_{nj}} \frac{\exp\left\lbrace\beta_0 x_j + \beta_1I(x_j>x_0) + \beta_2(x_j - x_0)_+\right\rbrace}{\int_{A_{nj}}\exp\left\lbrace\beta_0 x + \beta_1I(x>x_0) + \beta_2(x - x_0)_+\right\rbrace}
\end{equation}
This would give us another estimate and we could do a comparative study on these different methods to see which method gives the best estimate of the local features. 

\section{Conclusion}
	
In this article, we have developed a new methodology for estimating Circular probability distributions, using Spectral Isomorphism to solve an equivalent Spline Smoothing Problem on the Circle.  
	
	There has been already a lot of literature which aims at density estimation on the Circle however most of them are based on Kernel Density Estimation with different modifications. But, working on the problem with the knowledge of presence of local features has hardly been attempted. Since during smoothing of the kernel density the local feature gets lost, it is very difficult to capture these using the conventional kernel density estimation.
	
	We develop the methodology based on the usual Spline Smoothing problem, which tries to solve the Curve Fitting problem. In our case, because of the form of the Fourier coefficient on the circle, the minimisation problem for finding the density in the $L^2$ space, gets transformed to the Spline Smoothing problem on the Circle. The solution to that problem, via Bochner's Theorem can be transformed to a Kernel which has a square integrable second derivative. Thus, the density can be obtained through the inverse Fourier transformation. We have been able to solve the Spline Smoothing problem on the circle and also derive several results on the resultant density estimate. Estimates for the Bias and MSE has been calculated. The optimal rate of the density has been proved to match the optimal rate in the standard Kernel Density Estimation. Finally our methodology has been compared with other standard methods as a comparative study.
	
	Another major focus of this work has been in the detection and estimation of local features such that outliers, discontinuity and edges. The testing problem is different for the concerned local feature however, the overall multiple testing has been performed using a novel application of the Holm's procedure as well as a novel technique of working with a multivariate $p$-value profile.
	
	The Support and Outlier detection has been done by considering a $B(n,c/n)$ model, while discontinuity and edge detection has been performed by considering an exponential family. The final overall estimate is obtained by considering a partition of unity of a special function. The local functions has been estimated through exponential model and the smooth function is through a weighted adaptation of the Fourier Spline methodology. 
	
Lastly, we have applied our technique to a few artificial densities and to a real life data as well. In all situations, this technique proved to be a much better estimate than the standard kernel density estimates. We also have a few open problems for further research, especially, the problem of solving the Spline problem on circle, restricted to the non-negative definite solution. As well as the comparative study of the different techniques of estimating the local features. We are also focusing our attention to finding the estimate of the local functions in the Fourier paradigm itself. A few details regarding this and the complete flowchart has been discussed in the Appendix.

\section*{Appendix}
\subsection*{Detecting Local Features using the Fourier Paradigm}

Suppose $f$ denotes the density of the distribution. Let $\mathbb{P}$ denote the measure on the space. Correspondingly $\mathbb{P}_n$ denotes the empirical cdf. Note that by the Riesz Representation Theorem, $\mathbb{P}$ can be thought of as an operator working on $f$ such that 
\begin{equation}
\mathbb{P}(f) = \int f\textnormal{d}\mathbb{P}
\end{equation}
Let $\mathcal{T} = [0, 2\pi) = \bigcup A_{nj}$ where $A_{nj} = \left[\frac{2\pi j}{2^n},\frac{2\pi(j+1)}{2^n} \right)$ for $j = 0,1,\ldots, 2^n - 1$ is a partition of the circle. Note that $2^nA_{nj} - 2\pi j = [0, 2\pi)$. Now for $y \in [0,1)$, $ f\left(\frac{2\pi(j+y)}{2^n}\right)$ is the local information of $f$ on $A_{nj}$. $\mathbb{P}_n$ is a measure on $[0, 2\pi)$. Let the sample from $A_{nj}$ be denoted by $S_{nj}$ and $S_n = \bigcup_{j=0}^{2^n-1} S_{nj}$. Thus $2^nS_{nj} - 2\pi j$ maps $S_{nj}$ to $[0, 2\pi)$. As $n$ becomes larger the information becomes more and more local. 
	
Define $\mathbb{P}_{nj} = \sum_{x \in S_{nj}}\delta(x - .)$, as the empirical cdf on $A_{nj}$. Thus we get,
\begin{eqnarray}
\mathbb{P}_{nj}(\textbf{1}) &=& \int_\mathcal{T}\textnormal{d}\mathbb{P}_{nj} = |S_{nj}|\\
\mathbb{P}_{nj}(f) &=& \sum_{x \in S_{nj}}f(x)
\end{eqnarray}

Denote the conditional empirical cdf on $A_{nj}$ as $\mathbb{Q}_{nj} = \frac{\mathbb{P}_{nj}}{|S_{nj}|}$. Now consider the $2^n$-th Fourier coefficient, i.e,
\[\begin{split}
\mathbb{P}_n(z^{2^n}) &= \mathbb{P}_n(e^{i2^nx}) = \int_0^{2\pi}e^{i2^nx}\textnormal{d}\mathbb{P}_n\\
&= \sum_{j=0}^{2^n - 1}\int_{A_{nj}}e^{i2^nx}\textnormal{d}\mathbb{P}_n = \sum_{j=0}^{2^n - 1}\int_{\frac{2\pi j}{2^n}}^{\frac{2\pi(j+1)}{2^n}}e^{i2^nx}\textnormal{d}\mathbb{P}_n\\
&= \sum_{j = 0}^{2^n-1}\sum_{\frac{2\pi j}{2^n} \leq x_l < \frac{2\pi(j+1)}{2^n}}e^{i2^nx_l} = \sum_{j = 0}^{2^n-1} |S_{nj}| \mathbb{Q}_{nj}(e^{ix})\\
\end{split}
\]
Thus,
\begin{equation}
\int_0^{2\pi}e^{i2^nx}\textnormal{d}\mathbb{P}_n = \sum_{j = 0}^{2^n-1} |S_{nj}| \mathbb{Q}_{nj}(e^{ix})
\end{equation}
where $\mathbb{Q}_{nj}(e^{ix})$ is the scattered empirical Fourier coefficients which capture the localisation property at $A_{nj}$. Thus for each element $A_{nj}$ of the partition we have a localisation value as $\mathbb{Q}_{nj}(e^{ix})$ if $|S_{nj}| > 0$ and $0$ otherwise. We can consider this as the new data, as if we have only these $2^n$ data points, and we try to find the density value at these points. As mentioned before, we can see that because of equation (58) the localisation property is lost as the $2^n$-th Fourier coefficient is calculated by a weighted average of these local information.

Thus, if we can use the information present in each $A_{nj}$ to detect the presence of discontinuity or edges, through $\mathbb{Q}_{nj}(e^{ix})$, and estimate the same, then the whole procedure can be generalized to the Fourier paradigm. We are still trying to accomplish this and hope to succeed in the near future. 

\subsection*{Algorithm for Overall Density Estimation using Fourier Spline Technique}
\begin{description}
\item[Step 1 : ] Considering the given data on the circle, find an $n$ large enough and break the circle into layers of partitions of size $2^{k}$ for $k = 2, \ldots, n$. Call the $j$-th unit of the $k$-th layer as $A_{nj}$.
\item[Step 2 : ] Execute the algorithm for Support and Outlier detection as explained in Section 3.2. Finally perform the Holm's procedure to get the region of the support and the location of the outliers. 
\item[Step 3 : ] Remove the portions which are outside the support, or are outliers. In the remaining portion that perform the Discontinuity and Edge Detection using the algorithm as given in Section 3.3. Finally perform the Holm's procedure to get the locations of the discontinuities. 
\item[Step 4 : ] Keep a note of the $A_{nj}$'s having a discontinuity. Incorporate them during the creation of the open cover of the subset. 
\item[Step 5 : ] Create the partition of unity, by considering the open cover created in the previous step.
\item[Step 5 : ] Estimate the local features such as discontinuities and outliers by estimating the local exponential family, using the data from their corresponding support. The method is described in Section 4.1.
\item[Step 6 : ] Estimate the smooth function by the Fourier Spline Procedure, by considering the weighted empirical coefficients. Details in Section 2 and Section 4.2.
\item[Step 7 : ] Finally the overall estimate is obtained by proper normalisation. The final estimate is give by equation (50).

\end{description}

\end{document}